\documentclass[aps,prx,onecolumn,superscriptaddress,nofootinbib,longbibliography]{revtex4-2}
\usepackage[a4paper, left=1in, right=1in, top=0.95in, bottom=0.95in]{geometry}
\usepackage{fix-cm} 

\usepackage{cmap} 
\usepackage[utf8]{inputenc}
\usepackage[english]{babel}
\usepackage[T1]{fontenc}
\usepackage{amsmath}
\usepackage{amsfonts}
\usepackage{bm}
\usepackage{xcolor}
\definecolor{blueviolet}{rgb}{0.2, 0.2, 0.6}
\definecolor{webgreen}{rgb}{0,.5,0}
\definecolor{webbrown}{rgb}{.6,0,0}
\usepackage[pdftex,
	bookmarks=false,
	colorlinks=true, 
	urlcolor=webbrown, 
	linkcolor=blueviolet, 
	citecolor=webgreen,
	pdfstartpage=1,
	pdfstartview={FitH},  
	bookmarksopen=false
	]{hyperref}
\allowdisplaybreaks
\usepackage{tikz}
\usepackage{braket}
\usepackage{natbib}
\usepackage{amsthm}
\usepackage{dsfont}
\usepackage{listings}

\usepackage[T1]{fontenc}
\usepackage{lmodern}
\usepackage{fix-cm}

\numberwithin{equation}{section}

\newcounter{lemc}


\newtheorem{theorem}{Theorem}

\newtheorem{definition}{Definition}

\newtheorem{lemma}[lemc]{Lemma}

\newtheorem{example}{Example}

\theoremstyle{definition}


\newcommand{\Id}{\ensuremath{\mathbb{I}}}



\providecommand{\myvec}[1]{\ensuremath{\boldsymbol{#1}}}

\providecommand{\gg}{\ensuremath{\myvec{g}}}

\providecommand{\ll}{\ensuremath{\myvec{l}}}





\providecommand{\gg}{\ensuremath{\myvec{g}}}

\providecommand{\ll}{\ensuremath{\myvec{l}}}





\def\01{\{0,1\}}









\newcommand{\vertiii}[1]{{\left\vert\kern-0.25ex\left\vert\kern-0.25ex\left\vert #1 \right\vert\kern-0.25ex\right\vert\kern-0.25ex\right\vert}}



\newcommand{\rom}[1]{\mathtt{\uppercase\expandafter{\romannumeral #1\relax}}}

\DeclareMathOperator{\Tr}{tr}
\DeclareMathOperator*{\E}{{\mathbb{E}}}

\usepackage{mdframed}
\usepackage{xcolor}

\definecolor{puzzlebackground}{RGB}{247, 250, 253}    
\definecolor{puzzleborder}{RGB}{70, 142, 175}   
\definecolor{openQbackground}{RGB}{255, 248, 244}  
\definecolor{openQborder}{RGB}{188, 140, 112} 

\newcounter{puzzlecount}
\newenvironment{puzzlebox}[1]
{\begin{mdframed}[backgroundcolor=puzzlebackground,
    linecolor=puzzleborder,
    linewidth=3pt,
    leftmargin=0.1cm,
    rightmargin=0.1cm,
    skipabove=1.2\baselineskip]
    \refstepcounter{puzzlecount}
    {\vspace{0.5em}\large\bfseries Puzzle \thepuzzlecount: #1}\\[0.5em]}
{\end{mdframed}\vspace{1.\baselineskip}}

\newcounter{openQcount}
\newenvironment{openQbox}[1]
{\begin{mdframed}[backgroundcolor=openQbackground,
    linecolor=openQborder,
    linewidth=3pt,
    leftmargin=0.1cm,
    rightmargin=0.1cm,
    skipabove=0.6\baselineskip]
    \refstepcounter{openQcount}
    {\vspace{0.5em}\large\bfseries Open question \theopenQcount: #1}\\[0.5em]}
{\end{mdframed}\vspace{0.8\baselineskip}}


\DeclareMathOperator{\poly}{poly}

\newcommand{\ketbra}[2]{\lvert #1 \rangle \! \langle #2 \rvert}

\begin{document}
\fontsize{10}{12}\selectfont

\title{The vast world of quantum advantage}

\author{Hsin-Yuan Huang}
\affiliation{California Institute of Technology, Pasadena, California 91125, USA}
\affiliation{Google Quantum AI, Venice, California 90291, USA}

\author{Soonwon Choi}
\affiliation{Massachusetts Institute of Technology, Cambridge, Massachusetts 02139, USA}

\author{Jarrod R. McClean}
\affiliation{Google Quantum AI, Venice, California 90291, USA}

\author{John Preskill}
\affiliation{California Institute of Technology, Pasadena, California 91125, USA}
\affiliation{AWS Center for Quantum Computing, Pasadena, California 91125, USA}

\date{\today}

\begin{abstract}
The quest to identify quantum advantages lies at the heart of quantum technology. While quantum devices promise extraordinary capabilities, from exponential computational speedups to unprecedented measurement precision, distinguishing genuine advantages from mere illusions remains a formidable challenge. In this endeavor, quantum theorists are like prophets attempting to foretell the future, yet the boundary between visionary insight and unfounded fantasy is perilously thin. In this perspective, we examine our mathematical tools for navigating the vast world of quantum advantages across computation, learning, sensing, and communication. We explore five keystone properties: predictability, typicality, robustness, verifiability, and usefulness that define an ideal quantum advantage, and envision what new quantum advantages could arise in a future with ubiquitous quantum technology. We prove that some quantum advantages are inherently unpredictable using classical resources alone, suggesting a landscape far richer than what we can currently foresee. While mathematical rigor remains our indispensable guide, the ultimate power of quantum technologies may emerge from advantages we cannot yet conceive.
\end{abstract}

\maketitle

\section{The quest for quantum advantage}

\noindent Quantum entanglement is often portrayed as the epitome of quantum strangeness. Even when two entangled particles are separated by light years, measuring one particle causes the other to instantly ``know'' and collapse to a correlated state. This phenomenon appears to enable faster-than-light communication, violating the fundamental speed limit that governs our classical universe. Could this mysterious quantum effect enable instant galactic communication? The answer, as we now understand, is no. Yet this question reveals a deeper challenge that continues to perplex physicists: When does quantum physics truly outperform classical physics?

This quest to identify genuine quantum advantages, tasks where quantum systems fundamentally surpass what is achievable by their classical counterparts, lies at the heart of quantum technology. While quantum devices are heralded for their revolutionary potential, from quantum computers promising exponential speedups to quantum sensors achieving unprecedented precision, distinguishing genuine quantum advantages from mere illusions has proven remarkably subtle. Entangled particles illustrate this perfectly: despite exhibiting correlations that seem to defy classical physics, quantum entanglement does not necessarily enable capabilities beyond what classical systems can achieve.

\begin{puzzlebox}{Is this spooky? \label{puzzle:spooky}}
Consider two observers, Alice and Bob, sharing an entangled quantum state:
\begin{equation}
    \frac{1}{\sqrt{2}}(\ket{\uparrow} \otimes \ket{\uparrow} + \ket{\downarrow} \otimes \ket{\downarrow}).
\end{equation}
Alice and Bob are separated to opposite ends of the galaxy.
\begin{itemize}
    \item When Alice measures her particle's spin in the $\{\ket{\uparrow}, \ket{\downarrow}\}$ basis, Bob's particle instantly collapses to $\ket{\uparrow}$ or $\ket{\downarrow}$ depending on Alice's outcome.
    \item When Bob measures his spin in the $\{\ket{\uparrow}, \ket{\downarrow}\}$ basis, he immediately knows Alice's result.
    \item This happens instantly and is faster than light could travel between them.
\end{itemize}
Can we simulate this behavior using only classical objects, like a pair of socks? Does the answer change if Alice and Bob randomly choose to measure in either the $\{\ket{\uparrow}, \ket{\downarrow}\}$ basis or the rotated basis $\{\ket{\nearrow}, \ket{\swarrow}\}$? The solution can be found in Appendix~\ref{app:spooky-puzzle}.
\vspace{0.5em}
\end{puzzlebox}

This puzzle illustrates a crucial principle about quantum advantages: they can be remarkably subtle. When Alice and Bob measure only in the $\{\ket{\uparrow}, \ket{\downarrow}\}$ basis, their quantum correlations can be perfectly mimicked by classical systems, such as pairs of socks with predetermined properties. Yet a small change, allowing measurements in multiple bases, reveals a genuine quantum advantage that cannot be replicated by any classical system with local interactions. This fundamental impossibility was established in Bell's elegant theorem~\cite{bell1964einstein}, which demonstrated the non-local nature of quantum correlations. Such delicate distinctions between quantum and classical capabilities pervade the study of quantum advantages.

As quantum technology advances, distinguishing genuine advantages from \emph{pseudo-advantages} has become increasingly crucial. A pseudo-advantage arises when a quantum protocol appears to outperform all classical approaches, only to be matched by a clever classical strategy that was previously overlooked. Puzzle~\ref{puzzle:spooky} demonstrates that while quantum entanglement seems to enable non-local instantaneous correlations, the same statistical outcomes can be achieved classically with pairs of socks, at least when measurements are restricted to a single basis.

The subtlety of quantum advantages extends far beyond this example. Even when quantum algorithms appear to achieve extraordinary feats impossible for classical computers, careful analysis sometimes reveals surprising classical alternatives. The following puzzle illustrates how quantum advantages can be particularly deceptive in the realm of data analysis and machine learning.

\begin{puzzlebox}{Exponential quantum advantage in analyzing big data?}\label{puzzle:beat-qram}
Consider a massive dataset represented by $M$ exponentially long vectors $\vec{x}_1, \ldots, \vec{x}_M \in \mathbb{R}^{2^n}$, each having unit norm. Alice and Bob want to analyze this massive dataset, but they have very different tools at their disposal. Alice can harness the power of quantum technology. Suppose she can create quantum states representing her data in just $\mathrm{poly}(n)$ time that eventually stores the state in $\mathcal{O}(n)$ qubits of memory:
\begin{equation}
    \ket{x_k} = \sum_{i = 1}^{2^n} x_{k,i} \ket{i}.
\end{equation}
Using this ability and a basic quantum algorithm, Alice can efficiently compute inner products between any two vectors using the stored states. Even though the vectors have $2^n$ components, she can estimate:
\begin{equation}
    \braket{x_k | x_{\ell}} = \sum_{i=1}^{2^n} x_{k, i} \cdot x_{\ell, i} = \vec{x_k} \cdot \vec{x_{\ell}}
\end{equation}
to precision $1/\mathrm{poly}(n)$ in $\mathrm{poly}(n)$ time. This seems almost magical as Alice can analyze relationships between vectors of exponential size in time that grows logarithmically in their dimensions. Meanwhile, Bob only has access to classical technology: a standard random-access memory (RAM) with size $\mathrm{poly}(n)$ able to retrieve chunks of the same data at a time. At first glance, his task seems hopeless, as reading all entries of a single vector would take exponential time in $n$.

Can Bob's classical computation somehow match Alice's quantum prowess? Or has Alice discovered a genuine quantum advantage?
The solution can be found in Appendix~\ref{app:beat-qram}.
\vspace{0.5em}
\end{puzzlebox}

At first glance, Alice's quantum approach seems unbeatable. However, Bob has a clever solution using only classical importance sampling to exactly match Alice's quantum performance. The inner product can be estimated in classical polynomial time without any quantum technology.  We also note that Alice's ability to store the information into the amplitudes of a state in $\mathrm{poly}(n)$ is sometimes referred to as quantum random access memory (QRAM). The best known scheme for creating fault-tolerant QRAM generally requires a quantum spacetime footprint scaling as $\mathcal{O}(2^n)$, so this advantage can disappear due to the existing challenge in physical implementations~\cite{jaques2023qram, dalzell2025distillation}.

Such pseudo-advantages have appeared repeatedly throughout the development of quantum technology. Perhaps the most dramatic story began with the HHL algorithm, a quantum algorithm for solving linear systems of equations that exemplifies genuine quantum advantage. In 2009, Harrow, Hassidim, and Lloyd proved that quantum computers can solve linear systems superpolynomially faster than classical computers~\cite{harrow2009quantum}. More precisely, they proved that this superpolynomial quantum advantage follows from the widely believed conjecture that quantum computer is strictly more powerful than classical computers in at least one computational tasks, or $\mathsf{BPP} \neq \mathsf{BQP}$. This breakthrough inspired hundreds of other \emph{derived} quantum algorithms for machine learning, optimization, and numerical linear algebra, many claiming exponential speedups over classical approaches.

However, starting in 2018, this excitement collided with reality. Researchers discovered that some of these derived quantum algorithms could be matched by clever classical approaches \cite{tang2019quantum, tang2021quantum}, similar to Bob's solution in Puzzle~\ref{puzzle:beat-qram}. While the HHL algorithm still has genuine quantum advantage and may provide practical value in certain applications, many previously proposed applications of the HHL algorithm for practically useful tasks turn out to offer only modest computational speedups. This pattern, apparent quantum advantages dissolving under careful analysis, teaches us a profound lesson: even when building upon genuine quantum advantages, we must rigorously verify that the advantage persists in our specific application of interest.

The story of quantum advantages often takes multiple unexpected turns. Consider one of the first enticing facts one learns about in the first chapter quantum computing textbook: a quantum state of $n$ qubits contains $2^n$ complex amplitudes, suggesting an exponential advantage in information storage compared to classical bits. However, the next chapter usually dashes these hopes as Holevo's bound~\cite{holevo1973bounds} proves that only $\mathcal{O}(n)$ classical bits of information can be reliably retrieved from $n$ qubits through measurement. Yet remarkably, this is not the end of the story. The following puzzle illustrates how quantum advantages can reemerge in subtle ways.

\begin{puzzlebox}{Can quantum systems encode exponentially many bits? \label{puzzle:Holevo-vs-Raz}}
Consider a large, normalized vector $\vec{x}$ of dimension $N = 2^n$ stored in Alice's classical memory. Alice wishes to compress this data into $\log N$ qubits to save space and bandwidth when sending it to Bob. She encodes her data into quantum amplitudes using $n = \log N$ qubits:
\begin{equation}
    \ket{x} = \sum_{i = 1}^{2^n} x_{i} \ket{i}
\end{equation}
Satisfied with her extremely compact encoding, Alice realizes she can now transmit or teleport just $n$ qubits instead of the exponentially larger original data. She sends several copies to Bob and keeps a few herself before deleting the original data. Bob stores the quantum states for later processing but wonders, given Holevo's bound, could Alice have just sent $\mathcal{O}(n)$ classical bits and saved all the trouble of using quantum technology.

Did Alice and Bob gain anything by compressing the data into quantum states? Does compressing data into quantum states provide an advantage? Is this compression a genuine quantum advantage, or yet another illusion?
The solution can be found in Appendix~\ref{app:Holevo-vs-Raz}.
\vspace{0.5em}
\end{puzzlebox}

At first glance one may conclude that quantum states offer a simple way to exponentially compress data for communication. However, Holevo's bound appears to quickly and completely shatter this hope as Bob can only extract $\mathcal{O}(n)$ classical bits from his quantum states, suggesting Alice could have achieved the same result by sending classical bits. Yet this is not the complete story. The answer crucially depends on what Bob wants to do with the data. For some tasks, like computing the overlap with another vector, the compression into a quantum state offers no advantage~\cite{tang2019quantum}. However, for other tasks, such as determining whether the vector belongs to a particular subspace or its orthogonal complement, compressing data into a quantum state provides a genuine exponential quantum advantage. Classical communication for solving this problem requires $\mathcal{O}(2^n)$ bits, while quantum communication only requires $\mathcal{O}(n)$ qubits~\cite{raz1999exponential}. Moreover, this quantum advantage does not rely on any unproven computational conjecture. This type of advantage has found recent applications in quantum machine learning, such as for computing quadratic functions of data~\cite{gilboa2024exponential}.

These examples underscore the importance of rigorous analysis of quantum advantages. As quantum technologies rapidly advance and require substantial investments of human effort and financial resources, establishing rigorous theoretical foundations becomes essential to ensure these efforts are well-directed. Quantum computing ventures now attract billions of dollars in investment, and governments worldwide launch major quantum initiatives, making the stakes for establishing genuine quantum advantages increasingly important. Without solid theoretical foundations, we risk pursuing technological directions that may ultimately prove less effective than anticipated, potentially misallocating significant resources and undermining confidence in quantum technologies. The quantum computing community continues to grapple with these challenges, debating evaluation criteria for quantum algorithms and addressing misconceptions about near-term capabilities, while recognizing the critical need for algorithmic breakthroughs to justify ongoing hardware investments~\cite{zimboras2025myths, aaronson2025future, king2025quantum, lanes2025framework}.

At the same time, the quest for genuine quantum advantages offers rewards that extend far beyond responsible stewardship of resources. Seeking quantum advantages requires uncovering new insights into the fundamental distinctions between quantum and classical physics. These insights often transcend their original context, offering new perspectives in fields ranging from condensed matter physics to quantum gravity, from computational complexity to information theory. These profound intellectual rewards provide yet another compelling reason to pursue rigorous quantum advantages. The careful analysis of quantum advantages thus serves both the responsible stewardship of society's investment and our fundamental understanding of the quantum universe.

\begin{figure}
    \centering
    \includegraphics[width=1.0\linewidth]{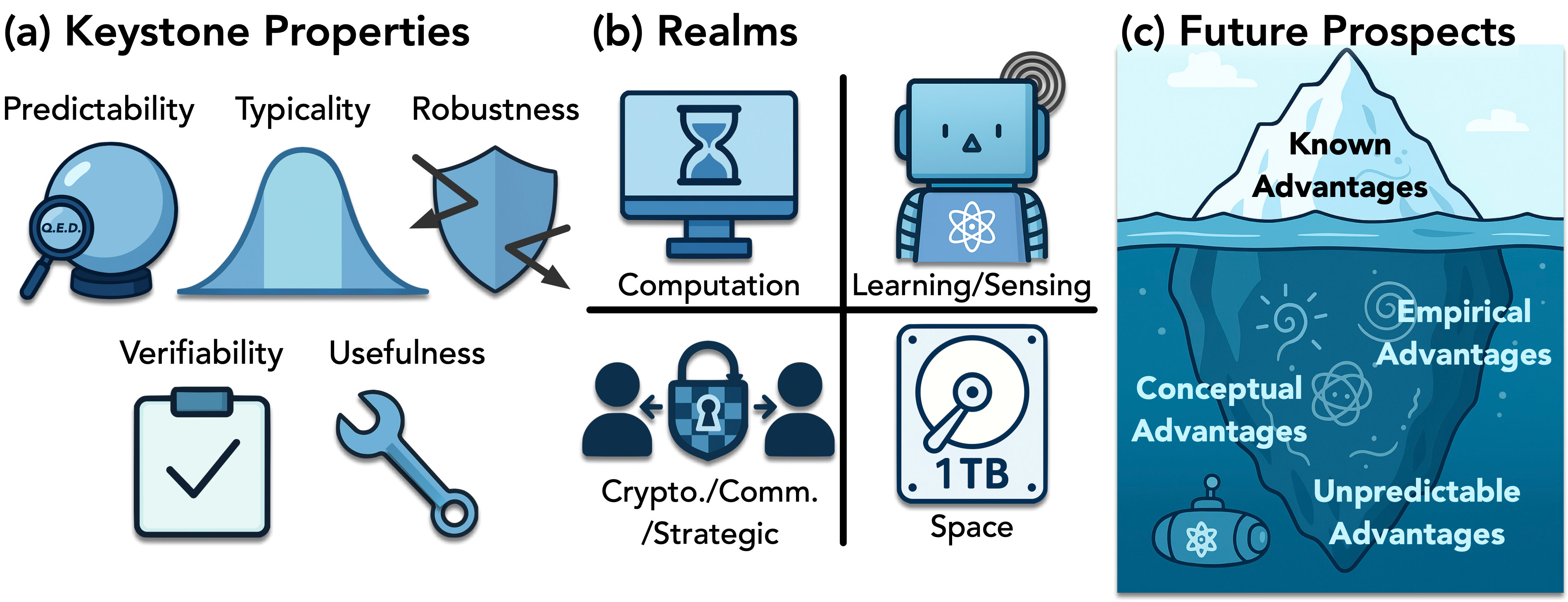}
    \caption{
        \textsc{Conceptual overview of the vast world of quantum advantage.}
        (a) \textsc{Keystones of quantum advantage.} We propose five essential properties that collectively define an ideal quantum advantage: \emph{Predictability} (supported by rigorous evidence), \emph{Typicality} (applying to a substantial fraction of naturally occurring problem instances), \emph{Robustness} (maintaining advantage despite noise and imperfections), \emph{Verifiability} (enabling efficient validation of correctness), and \emph{Usefulness} (providing genuine practical value).
        (b) \textsc{Realms of quantum advantage.} Quantum advantages can be found across four primary domains: \emph{Computation} (algorithmic speedups), \emph{Learning/Sensing} (enhanced methods for probing the world), \emph{Cryptographic/Communication/Strategic Games} (advantages in security and coordination), and \emph{Space} (improved memory efficiency).
        (c) \textsc{Future of quantum advantage.} The visible tip of the iceberg represents the currently known and predicted advantages, while the vast, submerged part represents the largely unexplored frontier of empirical, conceptual, and fundamentally unpredictable advantages that may only be discovered through future quantum technologies.
    }
    \label{fig:vast-world}
\end{figure}

\section{Keystones of quantum advantage}

\noindent Having explored why identifying genuine quantum advantages is crucial for the future of quantum technology, we now turn to a systematic analysis of what constitutes compelling examples of quantum advantages. What properties must they possess to be both theoretically convincing and practically valuable? This question is particularly pressing in the current era, where quantum technologies are rapidly advancing yet remain imperfect, and where substantial resources are being allocated to quantum research and development. We propose five keystone properties that define an ideal quantum advantage.

\subsection{Predictability}

All theorists studying quantum technologies are, in essence, prophets trying to predict the future. In the current era, where large-scale quantum technology remains elusive, the exploration of quantum advantages requires careful theoretical guidance to distinguish genuine advantages from pseudo-advantages that may vanish upon closer examination. This motivates our first keystone property: quantum advantages should be predictable, ideally through rigorous, substantive, and quantifiable evidence.

\begin{definition}[Predictability]
A quantum advantage is predictable if we have sufficient evidence supporting that, given the necessary quantum technology, we will achieve capabilities fundamentally beyond what is possible with classical technology.
\end{definition}

The importance of predictability stems from the inherent difficulty of reasoning about quantum advantages. Establishing genuine quantum advantages is a subtle endeavor fraught with hidden pitfalls. Our classical intuitions often fail us when reasoning about quantum systems, and seemingly obvious quantum advantages can dissolve under careful analysis. Thus, we need robust theoretical frameworks not only to distinguish genuine quantum advantages from pseudo-advantages that merely reflect our incomplete understanding, but also to quantify the magnitude of these advantages with mathematical rigor.

\begin{example}[Quantum recommendation systems]
The evolution of quantum recommendation systems illustrates the subtle challenges in predicting quantum advantages. Early work \cite{kerenidis2017quantum} proposed quantum algorithms that appeared to offer exponential speedups for recommendation problems, representing a natural application area where quantum advantages seemed plausible. However, the wide array of classical algorithmic techniques made it difficult to definitively establish the advantage. Building on techniques similar to those in Puzzle~\ref{puzzle:beat-qram}, researchers later found that classical algorithms could achieve comparable performance~\cite{tang2019quantum}, demonstrating how some apparent quantum advantages can dissolve when the full classical algorithmic landscape is thoroughly explored.
\end{example}

This and similar experiences have driven the quantum computing community to develop more rigorous methodologies for establishing predictable and quantifiable advantages. Several powerful paradigms have emerged, each with different strengths for providing reliable predictions about quantum capabilities. The gold standard is formal mathematical proof for separating the optimal performance achievable by classical technology from the performance using quantum technology. Such formal proofs allow us to predict quantum advantages with absolute certainty and precisely quantify their magnitude. Nonlocal games built upon Bell's inequality (explored in Puzzle~\ref{puzzle:spooky}) exemplify this approach.

A second powerful approach proceeds via reductions to widely believed conjectures. The key is identifying conjectures supported by diverse evidence across different fields. We then prove that the absence of quantum advantage would violate these conjectures. Common examples include $\mathsf{BPP} \neq \mathsf{BQP}$ (not all quantum polynomial-time problems are classically solvable in polynomial time), the non-collapse of the polynomial hierarchy \cite{stockmeyer1976polynomial}, and the classical hardness of factoring \cite{shor1997polynomial}. Even quantum computational hardness assumptions, such as the hardness for quantum computers to solve the learning with errors problem \cite{regev2009lattices}, can sometimes be transformed into proofs of quantum advantage \cite{mahadev2018classical, schuster2024random}.

\begin{example}[Shor's algorithm for factoring \cite{shor1997polynomial}]
Shor's algorithm provides compelling evidence for quantum advantage through reduction to the hardness of factoring, a problem underpinning trillions of dollars of global digital commerce \cite{unctad2024digital, forrester2024global}. RSA encryption, standardized by NIST as a federal cryptographic standard \cite{nist2013fips186}, forms a critical component of this digital infrastructure. Despite enormous financial incentives and decades of intensive research by cryptographers and computer scientists, no efficient classical algorithm has emerged. The proven ability of Shor's algorithm to factor efficiently, combined with the sustained difficulty of classical approaches, provides strong evidence for quantum advantage.
\end{example}

\begin{example}[Cooling physical systems \cite{chen2024local}]
A recent result shows how quantum advantage emerges in the seemingly simple task of cooling quantum systems. When a quantum system is placed in a cold environment, it naturally finds a local minimum of its energy. Local minima are not necessarily the absolute lowest (ground) state, but a state that cannot be improved by small changes. While finding ground states is believed hard even for quantum computers, cooling to local minima exhibits a clear quantum advantage: it is efficiently achievable by quantum computers but provably hard for classical ones under the widely-believed conjecture that $\mathsf{BPP} \neq \mathsf{BQP}$. This result is particularly compelling because it connects to a ubiquitous physical process, the natural cooling of quantum systems, while providing rigorous evidence for quantum advantage through reduction to complexity assumptions.
\end{example}

\begin{openQbox}{New Conjectures for Building Quantum Advantages}
Currently, provable quantum advantages rely on relatively few complexity-theoretic conjectures. Can we identify new mathematical conjectures, perhaps inspired by physical principles, that enable us to predict and develop new types of quantum advantages?
\vspace{0.5em}
\end{openQbox}

An emerging third paradigm elegantly combines empirical numerical calculations with mathematical proofs to identify \emph{instance-specific quantum advantages}. For certain quantum algorithms, we can predict their performance on specific problem instances by computing particular quantities that are feasible using classical computers. While quantum technology remains essential for actually solving the original problem, this approach enables rigorous comparison between predicted quantum performance and the capabilities of the best known classical algorithms, allowing us to forecast and quantify quantum advantages before constructing the necessary quantum hardware. Unlike theoretical worst-case analysis, this paradigm enables evaluation on specific problem instances that arise in real applications. While this approach cannot rule out future classical algorithmic breakthroughs, it substantially expands our ability to analyze potential quantum advantages in practically relevant problem instances.

\begin{example}[Quantum approximate optimization algorithm \cite{farhi2014quantum}]
Recent advances in understanding the Quantum Approximate Optimization Algorithm (QAOA)~\cite{farhi2014quantum} illustrate this paradigm. For specific problem classes, QAOA's performance can be predicted through classical numerical computation~\cite{basso2021quantum, basso2022performance, farhi2025lower}. Remarkably, rigorous lower bounds can be established even when running the quantum algorithm is infeasible with current technology. For MaxCut on high-girth 3-regular graphs, this analysis has provided evidence for potential quantum advantages over known classical algorithms~\cite{farhi2025lower}.
\end{example}

\begin{example}[Decoded quantum interferometry \cite{jordan2024optimization}]
Decoded quantum interferometry (DQI)~\cite{jordan2024optimization} provides another compelling illustration. DQI reduces classical optimization problems to classical decoding problems through quantum interference. DQI's performance can be predicted by evaluating classical decoders, even for large-scale instances. This enables comparison with classical algorithms before building quantum hardware. For certain problems, such as optimal polynomial intersection, this approach has revealed that DQI achieves better approximation ratios than any known classical algorithm~\cite{jordan2024optimization}.
\end{example}

\begin{example}[Quantum phase estimation \cite{kitaev1995quantum, lin2020near, wu2024variational}]
This paradigm also manifests in quantum phase estimation (QPE) for finding low-energy states of a Hamiltonian $H$. After classical algorithms like DMRG \cite{white1992density} find their best approximation to the ground state, we can predict how quantum phase estimation would improve upon this result. If the energy variance $\Tr(H^2\rho) - \Tr(H\rho)^2$ of the classical state $\rho$ is substantial, QPE can efficiently find lower-energy states \cite{wu2024variational}. This variance, often efficiently computable classically, provides a rigorous lower bound on the quantum advantage without requiring quantum hardware.
\end{example}

\begin{openQbox}{New Theory for Predicting Quantum Advantages}
How can we expand our mathematical arsenal for predicting and quantifying quantum advantages before having full-fledged quantum technology? Can we develop a systematic theory for forecasting when and how quantum systems will surpass classical limitations, along with precise bounds on the magnitude of these advantages? The ability to rigorously predict and quantify quantum advantages may be as transformative as the advantages themselves.
\vspace{0.5em}
\end{openQbox}

\vspace{-1.2em}
\subsection{Typicality}

When researchers establish quantum advantages for computational, communication, learning, or sensing problems, they often prove that quantum technology can efficiently solve all instances within a problem class, while classical technology provably cannot. Though such results seem ideal, the reality is more nuanced. The distinction between worst-case and average-case complexity plays a crucial role in determining the practical relevance of quantum advantages.

\begin{example}[Simulating quantum circuits] \label{ex:sim-qcirc}
Consider the task of simulating a closed quantum system evolving under a time-dependent Hamiltonian described by a polynomial-size quantum circuit $C$. The goal is to estimate local observables on the output state after evolving a simple product input state under circuit~$C$. Under the widely believed conjecture that $\mathsf{BPP} \neq \mathsf{BQP}$, this task exhibits a superpolynomial quantum advantage in the worst case; that is, for the quantum circuits that are most difficult to simulate classically. However, this quantum advantage only applies for some quantum circuits. For most quantum circuits consisting of independent and generic circuit layers, the quantum chaos in the dynamics enables efficient classical simulation through low-weight Pauli propagation \cite{aharonov2023polynomial, shao2023simulating, nemkov2023fourier, beguvsic2023fast, beguvsic2023simulating, fontana2023classical, rudolph2023classical, angrisani2024classically, dowling2024magic}. Moreover, it is widely believed that under natural Hamiltonian dynamics, local observables approach the predictions from generalized thermal equilibrium states characterized by a few conserved quantities such as particle numbers. This stark contrast between worst-case and average-case complexity exemplifies why worst-case advantages need not imply quantum advantages for typical instances.
\end{example}

The relationship between worst-case and average-case quantum advantages can be quite subtle. A problem may show no quantum advantage in the worst case, yet possess significant advantages for typical instances. Conversely, as illustrated in Example~\ref{ex:sim-qcirc}, a problem may demonstrate dramatic worst-case quantum advantage while admitting efficient classical solutions for most instances.

From a practical perspective, the ability to solve contrived worst-case instances is far less valuable than solving the typical instances encountered in real-world applications. This observation motivates our second criterion for meaningful quantum advantages:

\begin{definition}[Typicality]
A quantum advantage exhibits typicality if it applies to a substantial fraction of practically relevant problem instances, rather than being confined to particular instances that are especially hard classically.
\end{definition}

This definition emphasizes that quantum advantages should apply broadly to the kinds of problems we actually want to solve, not just carefully crafted instances to prove quantum-classical separations. A typical quantum advantage promises practical impact across a broad spectrum of applications, making it far more valuable than a narrow advantage limited to pathological cases that rarely occur in practice.

To establish typical quantum advantages, we can build on average-case complexity theory, a subfield of theoretical computer science that provides tools for analyzing algorithmic performance on typical problem instances. Unlike worst-case complexity, which considers the maximum computational resources required across all possible inputs, average-case complexity examines the expected resource consumption under natural probability distributions of problem instances. While this theoretical framework is still developing, it has already yielded several compelling examples of typical quantum advantages.

\begin{example}[Random circuit sampling \cite{arute2019quantum, zhu2022quantum, morvan2023phase, abanin2025constructive, gao2025establishing}]
Random circuit sampling (RCS) exemplifies how average-case hardness can establish typical quantum advantages. The task involves sampling from the output distribution of a random quantum circuit to within a specified total variation distance. For example, Google's Willow processor demonstrated quantum supremacy by sampling from a $103$-qubit random circuit in just a few minutes, while a classical computer would require much longer than the age of the universe to complete the same task \cite{abanin2025constructive}. The classical hardness of RCS stems from a reduction. For most circuit instances, approximate sampling to small total variation distance requires superpolynomial classical computation, under the widely believed assumption that the polynomial hierarchy does not collapse. These results establish that quantum advantages for RCS are typical. They hold for most random quantum circuits rather than carefully constructed worst-case instances.
\end{example}

\begin{example}[Discrete logarithm problem]
The discrete logarithm problem (DLP) provides another prototypical example of typical quantum advantage. Given a cyclic group $G$ of prime order $p$ with generator $g$, and an element $h \in G$, the task is to find $x$ such that $g^x = h$. While Shor's algorithm solves DLP in quantum polynomial time, the best-known classical algorithms require exponential time for most group elements when $G$ is carefully chosen \cite{boneh1998decision}. The generic nature of this quantum advantage stems from a powerful reduction: finding discrete logarithms for a random element $h \in G$ is provably as hard as solving the problem in the worst case \cite{shoup1997lower}. This reduction is known as random self-reducibility and holds for broad classes of groups, including those used in practical cryptographic systems, making the average-case hardness a foundation for widely deployed cryptographic protocols.
\end{example}

These examples highlight how average-case complexity can establish quantum advantages that are both theoretically rigorous and practically relevant. However, developing the theory of average-case complexity remains an ongoing challenge. The field must address several fundamental questions: identifying natural probability distributions over problem instances that capture real-world applications, finding more problems with properties like random self-reducibility that connect average-case to worst-case hardness, and developing new proof techniques for average-case lower bounds on classical computation. As of now, most superpolynomial quantum advantages are not known to be typical. The most well known examples are random circuit sampling, where the randomly generated bitstrings are not known to be of much use, and Shor's algorithm for breaking cryptographic protocols, where one can reinstate security by using post-quantum cryptography \cite{regev2009lattices}.  More recently, the problem of optimal polynomial interpolation (OPI) has been conjectured to be another such instance. For OPI, there are parameter regimes that are efficiently solvable by DQI~\cite{jordan2024optimization} but which are not solved in polynomial time by any known classical algorithms, even for average-case instances, though this problem is not known to have the property of random self-reducibility like DLP. Aside from such superpolynomial speedups, there are also a variety of average-case-hard problems, for example Circuit-SAT, that admit Grover-like quadratic speedups for typical instances.  More recently, quantum algorithms have been discovered giving a super-quadratic polynomial speedup for the problem of sparse learning parity with noise (LPN)~\cite{schmidhuber2025quartic}. As LPN is a canonical average-case-hard problem where even a quantum computer is not expected to offer an efficient solution in general, better-than-quadratic polynomial speedups for typical instances of this problem are conceptually significant.  These observations raise the following open question.

\vspace{0.6em}
\begin{openQbox}{Typical quantum advantage}
Are there other problems not based on random circuit sampling, Shor's algorithm, or OPI that exhibit typical, superpolynomial quantum advantage, where the quantum advantage applies to most instances of the problem rather than just worst-case instances? Can we establish new tools enabling us to show that broad families of problems have typical quantum advantage?
\vspace{0.5em}
\end{openQbox}

\vspace{-1.2em}
\subsection{Robustness}
\label{subsec:robustness}

\noindent The journey from theoretical to empirical quantum advantage often faces a formidable enemy: noise. Quantum systems are notoriously fragile as interactions with the environment easily destroy quantumness. Moreover, quantum advantages may erode under deviations from idealized theoretical assumptions. This raises a crucial question: can quantum advantages survive under noise and imperfection?

\begin{definition}[Robustness]
A quantum advantage is robust if it persists despite deviations from idealized theoretical assumptions, including imperfections in hardware, available resources, input data, etc.
\end{definition}

For quantum computing tasks, we possess the remarkable theoretical framework of quantum fault tolerance that provides a path to universal hardware robustness. The threshold theorem proves that if we can reduce errors in our quantum hardware below a certain threshold, we can achieve arbitrarily long quantum computations at reasonable overhead using carefully designed error correction protocols. This means quantum computational advantages, like Shor's factoring algorithm or quantum simulation of many-body systems, are theoretically robust: they maintain their advantage even with noisy quantum hardware, provided the noise is below threshold.

\begin{example}[Fault-tolerant quantum computation]
The threshold theorem ensures that a quantum circuit with $C$ perfect gates can be simulated by a noisy quantum circuit with error at most $\epsilon$, provided the noise per gate is below a threshold $p_{\text{th}}$. Traditional approaches require $\mathcal{O}(C \log^c(C/\epsilon))$ noisy gates for this simulation, where $c$ is a constant. Recent breakthroughs have dramatically improved this overhead: using constant-rate quantum locally testable codes, a depth-$D$ quantum circuit on $n$ qubits with $C$ ideal gates can be simulated by a noisy quantum circuit that maintains a constant space overhead (using only $\mathcal{O}(n)$ qubits) and near-logarithmic time overhead (using only $D \log^{1+o(1)}(C/\epsilon)$ depth)~\cite{gottesman2013fault,nguyen2024quantum}.
\end{example}

However, the story changes dramatically when we consider other kinds of quantum advantages. Quantum sensing offers a striking example.

\begin{example}[Entanglement-enhanced sensitivity]\label{ex:sensing}
By exploiting quantum entanglement, quantum sensors can achieve precision scaling as $1/(NT)$ using an $N$-particle probe with total sensing time $T$; this so-called Heisenberg-limited quantum enhanced scaling should be compared to the standard $1/(\sqrt{N} T)$ scaling achieved by an unentangled probe. However, this advantage dissolves in the presence of generic noise. Recent no-go theorems~\cite{zhou2018achieving} prove that, for many sensing tasks, even small amounts of noise can completely destroy the asymptotic quantum advantage, forcing the precision to the scaling $1 / \sqrt{NT}$ for both entangled and unentangled sensors. The $\sqrt{N}$ scaling advantage achieved using entangled probes in the ideal case is fundamentally unachievable when generic noise is present, even if quantum error correction is strategically deployed. A simplified, self-contained proof is given in Appendix~\ref{app:no_heisenberg}.
\end{example}

\noindent This example presents a stark contrast to the case of computation, where the exponential advantage of a quantum computer over a classical computer survives even when the quantum computer is subject to generic noise provided the noise is sufficiently weak.

Beyond hardware noise, quantum advantages must also withstand deviations from other theoretical assumptions. The problem instances occurring in the real world may not satisfy properties that guarantee quantum advantages.  In physical simulation, the initial condition or the specification of the dynamics may be slightly different from those considered theoretically. And in machine learning, the classical datasets may be inherently noisy or incomplete \cite{fischler1981random, tibshirani1996regression, natarajan2013learning}.

In recent years, theorists have envisioned quantum artificial intelligence (AI) agents that can learn about our universe by retrieving information using quantum sensors, storing the information in quantum memory, and processing the information using quantum computers. Ideal noiseless quantum AI agents have been proven to achieve some learning tasks with exponentially fewer experiments than would be required by classical AI agents that store data in classical memory and process it with classical computers \cite{huang2021information, chen2021exponential, huang2021quantum}. Because quantum AI agents integrate multiple quantum technologies, they face robustness challenges across multiple fronts simultaneously. The vulnerability of quantum sensing to noise may hamper data collection, while memory decoherence can corrupt stored information, and computational errors can compromise data processing. While some quantum learning advantages have been proven to persist under realistic noise conditions~\cite{huang2021quantum, huang2022foundations, chen2023complexity} and experimentally demonstrated in intermediate-scale superconducting qubit platforms~\cite{huang2021quantum} and photonic systems~\cite{liu2025quantum}, the robustness of many other learning advantages remains uncharacterized. In particular, it is not known whether universal fault-tolerant schemes exist that can preserve all exponential advantages enjoyed by ideal quantum AI agents.

\vspace{0.5em}
\begin{openQbox}{Fault-tolerant quantum AI agents}
Are there universal fault-tolerant schemes that enable a noisy quantum AI agent to efficiently simulate any noiseless ideal quantum AI agent? Such schemes must preserve the agent's learning advantages while maintaining robustness against multiple simultaneous sources of imperfection across sensing, memory, and computation.
\vspace{0.5em}
\end{openQbox}

\vspace{-1.2em}
\subsection{Verifiability}

\noindent Quantum technology operates at the frontier of physics, where highly complex entanglement and superposition dominate. As we push towards large-scale quantum systems, we venture into regimes of complexity and entanglement never before encountered in physical experiments. This raises a profound question: can we trust quantum mechanics to remain valid in these unexplored regimes?

The history of physics suggests caution. Just as Newton's laws break down at high velocities or in strong gravitational fields, superseded by Einstein's special or general relativity, quantum mechanics might behave differently in regimes of high complexity and large-scale entanglement. Our confidence in quantum theory comes primarily from experiments in relatively simple quantum systems with limited entanglement. We have never tested quantum mechanics at the complexity frontier where thousands of particles maintain complex quantum correlations. As we push more deeply into this complexity frontier, how can we trust that our quantum devices are actually achieving what we think they are?

This concern isn't merely theoretical. The fragile nature of quantum systems means that errors can accumulate and propagate in subtle ways, potentially leading to incorrect outputs that might nevertheless appear plausible. Moreover, as we venture into regimes of high entanglement that have never been experimentally explored, our intuition and standard verification techniques may fail us. This challenge motivates our fourth criterion for meaningful quantum advantages:

\begin{definition}[Verifiability]
A quantum advantage is verifiable if it can be checked efficiently that the quantum technology is producing the correct output.
\end{definition}

Verification serves a dual purpose in quantum technology. First, it ensures our quantum devices are functioning as intended. For some quantum advantages, this verification is straightforward. In quantum communication protocols demonstrating Bell inequality violations, the correlations between measurement outcomes directly verify the quantum advantage. Similarly, for quantum algorithms like Shor's factoring, the output can be efficiently verified by classical multiplication.

Second, and perhaps more profoundly, verification protocols can serve as stringent tests of quantum mechanics itself in previously unexplored regimes. When a highly complex quantum system behaves exactly as quantum theory predicts, especially in scenarios where classical simulation is provably impossible, we gain confidence in the theory's validity at higher scales of complexity. Conversely, any deviation from quantum predictions in these regimes could signal the discovery of new physics. However, verification becomes considerably more challenging for tasks involving complex quantum states or dynamics.

\begin{example}[Classical verification of quantum computation \cite{mahadev2018classical}]
A breakthrough in verifiable quantum advantage came with protocols that enable a classical computer to verify any quantum computation. These protocols allow a classical client to delegate quantum computations to a quantum server while maintaining both privacy and verifiability. The verification relies on cryptographic assumptions, such as the quantum hardness of the learning with errors problem. Through an elegant combination of cryptographic primitives and measurement-based quantum computation, these protocols ensure that any deviation from the prescribed quantum computation is detected with high probability. Most remarkably, the verification can be performed entirely by classical computers, providing a bridge between classical and quantum computation that enables rigorous testing of quantum advantages. This result demonstrates that a classical system can verify a quantum computation even when it cannot efficiently simulate the quantum computation.
\end{example}

While universal verification schemes exist for quantum computation, the landscape of verifiable quantum advantages in other domains remains largely unexplored. In quantum sensing, verification often relies on indirect evidence or statistical consistency checks. For quantum machine learning, verifying that a quantum model has truly learned better features or representations than alternative classical models remains challenging. This leads to several open challenges:

\vspace{0.5em}
\begin{openQbox}{Verification beyond computation}
Can we develop general verification schemes for quantum advantages in communication, learning, and sensing that go beyond computation? How can we verify outputs from quantum technology when we venture into new physical regimes?
\vspace{0.5em}
\end{openQbox}

\vspace{-1.2em}
\subsection{Usefulness}

\noindent The ultimate test of any technological advantage lies not in its theoretical power but in its practical impact. While demonstrating a quantum advantage that is predictable, generic, robust, and verifiable is itself a remarkable scientific achievement, these properties alone don't guarantee that the advantage will be valuable in real-world applications. This observation leads to our final criterion:

\begin{definition}[Usefulness]
A quantum advantage is useful if the advantage provides practical value to a user who seeks an effective method for a significant application (and who does not care whether the underlying technology is classical or quantum).
\end{definition}

The emphasis on practical value for users who don't care about the underlying physics is crucial. Many early quantum algorithms focused on problems specifically constructed to showcase quantum advantages, like random circuit sampling \cite{arute2019quantum} or boson sampling \cite{aaronson2011computational}. While these demonstrate fascinating quantum-classical separations, their practical utility remains limited. In contrast, truly useful quantum advantages should address problems we genuinely want to solve, regardless of how we solve them.

\begin{example}[Quantum simulation of materials]
Quantum simulation of strongly correlated quantum materials exemplifies a potentially useful quantum advantage. Understanding and predicting the properties of complex materials like high-temperature superconductors, exotic quantum magnets, or catalyst molecules has immense scientific and technological value, independent of the methods used. Classical computers face fundamental challenges in simulating these systems due to the intricate quantum correlations present in many-body systems, while quantum computers can naturally represent and evolve these quantum states. However, the landscape of simulation methods continues to evolve rapidly. Classical algorithms, particularly those leveraging machine learning and tensor networks \cite{orus2014practical}, have shown remarkable progress in describing previously intractable quantum systems. It remains challenging to predict whether future classical technological breakthroughs will enable efficient simulation of the quantum materials most relevant to practical applications.
\end{example}

The path to useful quantum advantages often involves identifying problems where quantum effects are inherent to the challenge at hand, rather than forcing quantum approaches onto classical problems. Moreover, usefulness must be evaluated in the context of rapidly evolving classical capabilities and the practical constraints of quantum technology. It might turn out that the most impactful quantum advantages are not those with the most impressive theoretical speedups, but instead the applications where quantum approaches offer qualitatively new capabilities or insights.

Finding new quantum advantage that achieves all five keystone properties simultaneously remains a grand challenge for the field. This is stated in the following open question:

\vspace{0.6em}
\begin{openQbox}{Ideal quantum advantage}
What quantum advantages are simultaneously predictable, typical, robust, verifiable, and useful? Can we develop new approaches to discover quantum advantages that satisfy all these criteria?
\vspace{0.4em}
\end{openQbox}

\section{Realms of quantum advantage}

The vast world of quantum advantage extends far beyond computational speedups and presents a rich tapestry of opportunities where quantum resources enable capabilities beyond those achievable with classical resources alone. We can organize these quantum advantages into distinct realms, each characterized by fundamentally different sources of quantum superiority over classical approaches. These realms differ not only in their applications but in their fundamental nature. Some rely on unproven complexity assumptions, while others are guaranteed by the laws of physics themselves. For comprehensive surveys of quantum algorithms and protocols across these realms, we refer readers to existing resources~\cite{degen2017quantum, pirandola2020advances, dalzell2023quantum, QuantumAlgorithmZoo}.

\subsection{Computational advantages}

The computational realm constitutes the most extensively studied realm of quantum advantages, characterized by algorithms that promise faster solutions to computational problems. Yet this realm has a complex landscape where genuine advantages coexist with subtle illusions, and where the foundation of each advantage rests on computational complexity assumptions.

\vspace{0.5em}\paragraph*{Complexity-theoretic foundation.} Unlike other realms we will explore, computational quantum advantages fundamentally depend on unproven conjectures about the computational difficulty of certain problems. The canonical assumption $\mathsf{BPP} \neq \mathsf{BQP}$ asserts that quantum computers can efficiently solve some problems that are intractable for classical computers. While this assumption is supported by substantial evidence, it remains a conjecture, one that underpins virtually every claimed computational quantum advantage.

\vspace{0.5em}\paragraph*{Oracle-based advantages.} The theoretical power of quantum computation was first established in oracular settings \cite{bernstein1993quantum, simon1997power}, where an oracle for computing a function is given as a black box. Simon's algorithm~\cite{simon1997power} provided the first exponential advantage and directly inspired the development of Shor's algorithm. Other foundational results include Grover's search \cite{grover1996fast} and quantum walks \cite{farhi1998quantum, childs2003exponential, farhi2007quantum} exhibiting polynomial to exponential oracular advantages. While oracle-based advantages provide important conceptual foundations for understanding quantum computational power, they are not computational advantages in the usual sense. The black box assumption means that when the oracle's internal structure is revealed, the quantum advantage could disappear as classical algorithms may find efficient shortcuts that exploit the revealed structure.

\vspace{0.5em}\paragraph*{Shor's algorithm.} Building on the theoretical insights from Simon's algorithm, Shor's algorithm~\cite{shor1997polynomial} exemplifies genuine computational quantum advantage. It exploits the hidden subgroup structure that quantum computers can efficiently solve, providing a concrete path from oracle-based advantage to real-world impact. As discussed in earlier sections, it satisfies our keystone criteria and demonstrates how quantum advantage can have profound real-world implications through its threat to cryptographic infrastructure worldwide. Even decades after its discovery, no efficient classical algorithm has emerged to challenge its speedup.

\vspace{0.5em}\paragraph*{Quantum simulation.} Beyond number-theoretic problems, quantum simulation represents a particularly natural computational application, as the task of simulating quantum many-body systems inherently involves quantum mechanical effects that classical computers struggle to capture efficiently~\cite{feynman1982simulating, lloyd1996universal}. The intuition is compelling: quantum systems should naturally simulate quantum physics. Yet this intuitive advantage faces relentless competition from advancing classical methods. Sophisticated tensor network algorithms~\cite{orus2014practical}, machine learning approaches~\cite{carleo2017solving}, and quantum-inspired classical methods~\cite{lee2023evaluating} continue to push the boundaries of classical simulation. While quantum simulators have the potential to achieve advantages for specific problems, the landscape remains dynamic. What appears quantum-advantageous today may be challenged by classical innovations tomorrow.

\vspace{0.5em}\paragraph*{Power of data.} A subtle challenge to computational quantum advantages emerges with the recognition of the power of data~\cite{huang2021power}. When quantum experiments generate data that cannot be efficiently produced classically, even classical machine learning algorithms trained on such data can solve computational problems beyond the reach of classical algorithms operating in isolation~\cite{huang2021provably, lewis2024improved}. This reveals a deeper vulnerability: as we accumulate more data about our quantum universe through quantum simulations, quantum sensors, and quantum experiments, classical machine learning algorithms, by exploiting that data, might learn to approximate quantum phenomena without requiring additional quantum computation. However, assuming $\mathsf{BPP} \neq \mathsf{BQP}$, classical algorithms cannot universally simulate all quantum experiments. This suggests that important applications can be realized by quantum computing and classical AI working together; for example, quantum computers might collect data in previously unexplored physical regimes, empowering classical AI to predict properties of exotic quantum systems that have not yet been observed.

\vspace{1em}
The realm of computational advantages presents a nuanced picture: while the quantum advantage exhibited by foundational algorithms like Shor's remains well grounded, the broader landscape is characterized by ongoing competition between quantum and classical computers. Quantum computational advantages require constant vigilance as the classical state-of-the-art continues to advance.

\subsection{Learning/sensing advantages}

Moving beyond pure computation, the learning and sensing realm offers quantum advantages that emerge when quantum technology interfaces directly with the physical world. Here, advantages arise not from abstract algorithmic improvements, but from the fundamental quantum nature of information gathering, the physical systems we seek to understand, and the ability to process quantum information in ways that classical systems cannot. Crucially, many advantages in this realm can be established without relying on computational complexity assumptions.

\vspace{0.5em}\paragraph*{Physics-based foundation.} Unlike computational advantages that rely on unproven complexity-theoretic conjectures, nearly all known learning and sensing advantages rest on fundamental physical principles. The standard quantum limit in sensing, Heisenberg's uncertainty principle, and the inevitable disturbance of quantum states upon measurement all derive from the rigorous mathematical foundations of quantum mechanics, foundations that have been extensively validated over the past century. These laws of nature enable the development of learning and sensing advantages without requiring any assumptions about computational hardness. This physical grounding endows many learning and sensing advantages with a intrinsic robustness that computational advantages inherently lack.

\vspace{0.5em}\paragraph*{Quantum sensing fundamentals and limits.} Quantum sensing harnesses intrinsically quantum objects, such as atoms, ions, molecules, or photons, to measure physical quantities with enhanced performance. This quantum enhancement requires sophisticated quantum states, exemplified by GHZ states for magnetic field sensing~\cite{giovannetti2004quantum, giovannetti2006quantum}. As described in Section~\ref{subsec:robustness}, achieving this quantum advantage faces a fundamental adversary: noise. The Hamiltonian-not-in-Lindbladian-Span (HNLS) criterion reveals that a quantum enhancement survives in the presence of noise only when the target signal and noise processes can be distinguished at the operator level~\cite{zhou2018achieving}. In generic scenarios, this condition fails, rendering the asymptotic quantum sensing advantage fundamentally unattainable even with quantum error correction~\cite{demkowicz2012elusive}. This stands in stark contrast to quantum computation, where universal fault tolerance provides a clear path to noise robustness.

\vspace{0.5em}\paragraph*{Practical quantum sensing applications.} Despite theoretical limitations, quantum sensing has already achieved remarkable real-world successes. While finding examples where entanglement enhances sensing capabilities in a real-world setting has been famously elusive, a notable exception is LIGO's detection of gravitational waves, employing squeezed light to reduce quantum noise below the standard quantum limit~\cite{ligo2011gravitational,abbott2016observation,schnabel2017squeezed,ganapathy2023broadband}. In contrast, advantages exploiting the innate sensitivity of quantum sensors or ensembles thereof have been far more common. For example, nitrogen-vacancy centers in diamond enable quantum-enhanced magnetometry with nanoscale spatial resolution~\cite{taylor2008high, lovchinsky2016nuclear} or under extreme temperature or pressure conditions~\cite{bhattacharyya2024imaging, wang2024imaging}, while quantum-enhanced atomic clocks push temporal precision to unprecedented levels~\cite{ludlow2015optical, bothwell2022resolving, nicholson2015systematic, schine2022long}. In sensing applications, even constant factor improvements over existing technology can provide transformative practical utility. Unlike computational tasks, where performance can often be enhanced by distributing workloads across multiple processors, sensing tasks frequently face fundamental constraints imposed by their physical infrastructure. For instance, achieving nanoscale spatial resolution in magnetometry requires the sensor to be physically positioned at the target location, while gravitational wave detection demands interferometers with enormous 4-kilometer-long arms. These physical constraints make it prohibitively expensive or physically impossible to simply deploy additional sensors to enhance performance. Quantum enhancements thus offer a fundamentally different value proposition than computational speedups.

\vspace{0.5em}\paragraph*{Quantum learning agents.} Recent developments have found that the most significant advantages emerge when we consider quantum machines that can see with quantum sensors, think with quantum computers, and remember with quantum memory~\cite{huang2021information, aharonov2021quantum, chen2021exponential, huang2021quantum, chen2024tight, oh2024entanglement, allen2025quantum, liu2025quantum}. Such quantum learning agents signify a paradigm shift beyond traditional quantum sensing approaches by enabling coherent processing of quantum information across multiple experimental runs.

\vspace{0.5em}\paragraph*{Quantum-computing-enhanced sensing.} When applied to sensing classical oscillating fields such as electromagnetic signals or gravitational waves, quantum agents could achieve beyond-quadratic polynomial speedups. While quantum algorithms like Grover search~\cite{grover1996fast} were not traditionally considered relevant to sensing applications, recent work~\cite{allen2025quantum} demonstrates how quantum computers can be integrated with quantum sensors in a framework termed quantum-computing-enhanced sensing to learn properties of classical fields substantially faster than quantum sensors controlled by classical computers alone.

\vspace{0.5em}\paragraph*{Exponential learning advantages.} While the aforementioned advantages are polynomial, one may wonder whether exponential quantum advantages exist in the learning and sensing realm. The answer is proven to be affirmative when quantum learning agents are applied to study many-body physical systems and dynamics~\cite{huang2021information, aharonov2021quantum, chen2021exponential, huang2021quantum, chen2024tight, oh2024entanglement, liu2025quantum}. These quantum agents can learn from exponentially fewer experiments than classical approaches for tasks including predicting physical properties and learning approximate models of quantum dynamics. The exponential improvement emerges from the ability to preserve quantum correlations across multiple experiments that classical approaches must necessarily destroy through intermediate measurements, enabling quantum learning agents to extract exponentially more information per experiment than their classical counterparts.

\vspace{1em}
Rather than relying on unproven complexity assumptions, quantum advantages in learning/sensing realm often emerge from the quantum nature of physical reality itself. Looking ahead, the most transformative opportunities may emerge from quantum learning agents that seamlessly integrate quantum sensors and quantum computers. Some of these advantages have been proven to persist under realistic noise~\cite{huang2021quantum, huang2022foundations, chen2023complexity, liu2025quantum}, but the general theory of fault-tolerant quantum agents remains to be developed.

\subsection{Cryptographic/communication/strategic game advantages}

In contrast to both computational and learning/sensing advantages, the cryptographic realm encompasses quantum advantages that emerge from quantum mechanics' fundamental properties, such as entanglement, no-cloning, and measurement disturbance, applied to security, communication, and coordination tasks. Unlike computational advantages that depend on unproven complexity assumptions, these advantages derive their power directly from the mathematical structure of quantum mechanics itself, providing unconditional security guarantees that remain inviolable as long as quantum mechanics remains an accurate description of our universe.

\vspace{0.5em}\paragraph*{Physics-based security.} The defining characteristic of this realm is that quantum mechanics itself provides security guarantees impossible to achieve classically. The no-cloning theorem ensures that unknown quantum states cannot be accurately copied without detection. The measurement postulate guarantees that extracting information from a quantum system necessarily disturbs it, leaving unmistakable evidence if someone has been eavesdropping. Quantum entanglement enables correlations that fundamentally surpass classical limitations, as demonstrated by Bell's inequality. Together, these principles create a foundation for cryptographic protocols whose security is guaranteed by the laws of physics.

\vspace{0.5em}\paragraph*{Quantum encryption and certified randomness.} Quantum mechanics enables parties to establish secure communication channels and verify information authenticity with guarantees that transcend computational security. Quantum key distribution exemplifies this revolutionary capability: the BB84 protocol~\cite{bennett1984quantum} enables Alice and Bob to establish a shared cryptographic key via quantum communication with security guaranteed by the laws of physics. Any eavesdropper necessarily disturbs the quantum transmission, revealing their presence through detectable changes in the quantum states. This represents a paradigmatic shift from classical cryptographic protocols, where security relies on the assumption that computations that could break the protocol in principle are too difficult to execute in practice. Such computational assumptions could potentially be undermined by future algorithmic breakthroughs or advances in computational power. Certified randomness generation~\cite{colbeck2012quantum, liu2025certified} exploits these principles to create truly random bits that have not previously been generated prior to the protocol, whose unpredictability is guaranteed by quantum mechanics. Unlike classical pseudorandom number generators that are fundamentally deterministic, quantum randomness provides bits that cannot be spoofed by a secret seed or other backdoor. Furthermore, the randomness is guaranteed even if the user does not trust the inner workings of the quantum device used in the protocol. Such device-independent randomness certification has profound implications for cryptographic applications requiring secure random bits.

\vspace{0.5em}\paragraph*{Unforgeability protocols.} The no-cloning theorem enables the creation of objects with inherent copy-protection, opening new classes of applications impossible in classical physics. Quantum money schemes~\cite{wiesner1983conjugate} aim to create unforgeable currency by using quantum states to encode serial numbers that can be efficiently verified but cannot be perfectly duplicated. Certified deletion protocols~\cite{broadbent2020uncloneable} enable Alice to send quantum-encrypted data to Bob, who can later provide cryptographic proof that he has permanently and verifiably destroyed the information. This capability is impossible with classical physics, where information can always be copied before deletion. Copy-protected software~\cite{aaronson2009quantum} and one-time programs~\cite{goldwasser2008one} extend these principles to software distribution, enabling programs that cannot be duplicated while preserving their computational functionality. These protocols may revolutionize intellectual property protection, data privacy, regulatory compliance, and rights management by making such violations physically impossible rather than merely legally prohibited.

\vspace{0.5em}\paragraph*{Quantum entanglement and strategic games.} Quantum entanglement transcends classical communication limitations through nonlocal correlations that exhibit behaviors impossible in any local classical theory. As explored in Puzzle~\ref{puzzle:spooky}, entangled quantum systems demonstrate correlations that cannot be reproduced by any classical mechanism. In strategic scenarios requiring spatially separated parties to act simultaneously in a coordinated manner, quantum protocols could provide fundamental advantages over classical protocols. For example,~\cite{ding2024coordinating} has illustrated how entangled quantum systems shared between spatially separated stock exchanges can provide coordination advantages in some high-frequency trading applications, where small time improvements can translate into significant competitive value.

\vspace{1em}
The cryptographic, communication, and strategic game realm uses quantum mechanics to provide security guarantees that are categorically impossible in the classical world. Unlike computational advantages that may be undermined by unexpected algorithmic discoveries, these advantages are guaranteed by the mathematical structure of quantum mechanics. As quantum technologies mature, this realm could revolutionize how we approach security, privacy, and trust in our interconnected world.

\subsection{Space advantages}

The realm of space advantages addresses a fundamental challenge of our data-driven era: processing data streams that vastly exceed available memory. While many classical algorithms require memory proportional to input size, quantum algorithms can sometimes achieve exponential space savings by encoding classical information in the exponentially large Hilbert space of a quantum device.

\vspace{0.5em}\paragraph*{Quantum compression for data streams.} As demonstrated in Puzzle~\ref{puzzle:Holevo-vs-Raz}, quantum states can serve as exponentially compressed representations of classical data for specific computational tasks. While Holevo's bound~\cite{holevo1973bounds} forbids the extraction of an exponential number of classical bits from a small quantum memory, certain computational tasks can be performed on the compressed quantum representation without full data reconstruction. This quantum compression becomes particularly powerful in streaming scenarios where massive data streams must be processed with limited memory. Classical algorithms must make irreversible decisions about which information to retain as each data element arrives. In contrast, quantum algorithms can maintain superpositions over many possible configurations simultaneously, effectively postponing these irreversible decisions until more information becomes available.

\vspace{0.5em}\paragraph*{Applications and limitations.} Recent work on directed maximum cut problems~\cite{kallaugher2024exponential} demonstrates how quantum algorithms can solve natural graph problems using exponentially less memory than classical approaches when the input size vastly exceeds memory capacity. The vector-in-subspace problem studied by Raz~\cite{raz1999exponential} provides another example: determining which of two subspaces contains a given vector requires exponential classical communication but only linear quantum communication. However, space advantages face important limitations. They typically apply only when input size significantly exceeds memory capacity; furthermore, inputting a large amount of data into a quantum system can be costly, and the computational task performed must be compatible with Holevo's bound. Despite these constraints, space advantages represent a promising direction as data generation accelerates.

\vspace{1em}
This realm highlights how quantum superposition can address fundamental resource limitations in data processing and storage, offering more efficient approaches to existing computational tasks. While this realm remains the least explored, it holds significant potential for making previously infeasible large-scale computations tractable through exponential memory reductions.

\section{The future of quantum advantage}

\noindent The past decades have witnessed remarkable progress in predicting quantum advantages through theoretical approaches. Using mathematical proofs and complexity-theoretic arguments, researchers have established quantum advantages for various computational, sensing, learning, and communication tasks. Yet this theoretical framework, while powerful, captures only part of the story. The history of classical computing offers a cautionary tale: algorithms with the strongest provable advantages rarely become the most useful in practice, and the most transformative applications of modern computers were often unpredicted or achieved without rigorous performance guarantees.  This observation suggests we need a broader framework for identifying and understanding genuine quantum advantages.

Envision a future, perhaps 100 years from now, where quantum technology has become as ubiquitous as classical computers are today. A complete quantum technological infrastructure would enable us to probe the world with quantum sensors, preserve quantum information in reliable quantum memories, process quantum data with fault-tolerant quantum computers, and transmit quantum states through quantum networks. In this future, the competition between quantum and classical approaches would shift from theoretical guarantees to head-to-head empirical competition, much as classical algorithms compete today based on practical metrics rather than asymptotic bounds. How should we expand our framework for understanding such future quantum advantages?

In this section, we explore three perspectives that we expect to become increasingly more relevant in the future. First, we examine how empirical evidence, rather than theoretical guarantees, might drive the discovery and validation of quantum advantages. Second, we consider how quantum approaches could offer conceptual advantages, new ways of thinking about problems that turn out to be more natural or insightful than classical approaches. Finally, and most profoundly, we show that some quantum advantages are fundamentally unpredictable using classical resources alone, suggesting that the landscape of quantum advantages may be far richer than we can currently envision.

\subsection{Empirical quantum advantages}

\noindent
First we emphasize that empirical performance, rather than theoretical guarantees, will likely drive the adoption of quantum methods in practice. Key metrics will include empirical runtime, cost of implementation, and total time-to-solution. This shift mirrors the evolution of classical computing, where widely used algorithms often differ from those with the strongest asymptotic guarantees. When focusing on runtime, empirical advantages can manifest in multiple ways. The most straightforward are constant-factor improvements on practical problem instances; these may not grab the attention of theorists but may be very important to users. It is also notable that the classical algorithms achieving the best asymptotic scaling might not be the most practically useful due to prohibitively large constant factors. Multiplication of $n\times n$ matrices provides a classic example. Although theoretically superior algorithms exist with better asymptotic complexity, the simple algorithm with $\mathcal{O}(n^3)$ steps remains standard in practice due to its smaller constants and cache efficiency.

Quantum advantage may also emerge from faster total time-to-solution starting from the problem or concept going all the way to the solution, which we term ``concept-to-solution''. Concept-to-solution includes not only the algorithm's runtime, but the entire process from problem formulation to working solution. This encompasses the human time spent designing experimental protocols, implementing algorithms, and iteratively refining them. A practical quantum advantage might occur if quantum technology enables faster overall solution time, even if it turns out that classical algorithms with similar runtime exist but are not discovered until much later.

\begin{example}[Quantum advantage in concept-to-solution]
Consider a scenario where designing a quantum algorithm for a new computational problem takes an hour and the quantum computation itself solves the problem in about one second. Even if a talented computer scientist could eventually discover a classical algorithm with a one-second classical computational time, if that discovery requires a year of dedicated research, the quantum approach still offers a significant advantage in concept-to-solution: an hour versus a year. This form of advantage, where quantum methods provide a more direct path to efficient solutions, may prove particularly valuable as quantum technology matures.
\end{example}

Large-scale quantum computing might also reveal new empirical benefits that transcend theoretical guarantees. Modern machine learning offers a preview of such possibilities. Large language models have demonstrated capabilities far beyond theoretical predictions. Similarly, as quantum systems scale up in size and architectural complexity, they may exhibit unexpected emergent behaviors that, while difficult to characterize theoretically, provide transformative practical advantages.

Such empirical quantum advantages represent a significant departure from how we typically evaluate quantum technology today. Rather than focusing solely on asymptotic complexity and provable speedups, we envision a future where quantum methods might be preferred for their practical benefits: faster solution development, simpler implementation, or emergent capabilities that arise from scale. While such advantages may be harder to prove rigorously, they could ultimately drive the widespread adoption of quantum technology.

\subsection{Conceptual quantum advantages}

\noindent Secondly, we highlight the value of looking beyond physical resources like time, space, cost, and energy to consider broader forms of advantage. Modern philosophy of science recognizes that empirically equivalent theories can differ dramatically in their conceptual value. The heliocentric model of the solar system, while making essentially identical predictions to certain geocentric models, provided a simpler framework that transformed our understanding of the cosmos and catalyzed new scientific directions.

We propose that quantum technology might offer similar conceptual advantages. Consider two solutions to a problem, one classical, one quantum, with similar resource requirements. Even if they offer comparable performance, the quantum solution might be conceptually simpler, providing a more natural framework for understanding the problem. This conceptual clarity could manifest in multiple ways: easier reasoning about correctness, more straightforward integration with other components, or more natural extensions to related problems.

The value of such conceptual advantages extends beyond mere aesthetic preference. Simpler conceptual frameworks often reflect deeper underlying structure, potentially leading researchers to new theoretical insights and technological directions. Just as quantum mechanics itself provided new ways of understanding physical phenomena, quantum approaches to practical problems might offer new perspectives that catalyze further advances. While such advantages might seem secondary in today's resource-constrained quantum era, they could become crucial in a future with ubiquitous quantum technology, forming cornerstones for new theoretical and technological developments.

\subsection{Unpredictable quantum advantages}

\noindent Our third and most profound perspective challenges a core assumption of our current framework: that we can predict quantum advantages using classical resources and mathematical reasoning. We prove that this assumption fails even for well-defined computational problems, revealing fundamental limits to our ability to identify quantum advantages using classical approaches alone. This result illustrates concretely the more general principle that some emergent quantum advantages may be quite unexpected when they are discovered.

To establish this limitation rigorously, we consider a concrete computational problem regarding whether a specified computational task admits a quantum advantage or not.  This problem is inspired by the field of meta-complexity, the study of the complexity of computational problems which address questions about complexity theory itself \cite{Ren2023relativization, Simons2023meta}. Earlier, we discussed how Pauli propagation provides an efficient classical method for simulating most quantum circuits, though large families of circuits remain classically intractable. This naturally leads us to ask: given a specific quantum circuit, can we determine whether executing it on a quantum computer would outperform this classical simulation method?

\begin{theorem}[Classical hardness in predicting quantum advantages; informal]
Consider the decision problem of predicting whether executing a particular quantum circuit on a quantum computer has a computational advantage over the Pauli propagation classical simulation method on that same circuit. Assuming $\mathsf{BPP} \neq \mathsf{BQP}$, this decision problem exhibits a quantum advantage: while a quantum computer can efficiently solve this problem, no classical algorithm can solve it efficiently.
\end{theorem}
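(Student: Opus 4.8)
The plan is to recast the informal statement as a promise problem and then prove a matching quantum upper bound and classical lower bound. Fix the Pauli propagation method together with its truncation parameter $k$, the maximal Pauli weight it retains during back-propagation (necessarily $k<n$ for the method to run in polynomial time). For an $n$-qubit circuit $C$ on input $\ket{0^n}$ and a Pauli observable $O$, let $v_{\mathrm{PP}}(C)$ be the (efficiently, classically computable) Pauli-propagation estimate and $v_{\mathrm{true}}(C)=\bra{0^n}C^\dagger O\,C\ket{0^n}$. Define the promise problem $\Pi$: answer \textsc{yes} (``advantage'') when $\abs{v_{\mathrm{true}}(C)-v_{\mathrm{PP}}(C)}\geq b$ and \textsc{no} (``no advantage'') when $\abs{v_{\mathrm{true}}(C)-v_{\mathrm{PP}}(C)}\leq a$, for fixed constants $a<b$. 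It then suffices to show (i) $\Pi\in\mathsf{BQP}$ and (ii) $\Pi$ is $\mathsf{BQP}$-hard under polynomial-time reductions, since together with $\mathsf{BPP}\neq\mathsf{BQP}$ these rule out a classical efficient solver.

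The quantum upper bound is the easy direction. A quantum computer prepares $C\ket{0^n}$ and estimates $v_{\mathrm{true}}(C)$ to precision $(b-a)/4$ with a standard overlap/Hadamard test and $\mathcal{O}((b-a)^{-2})$ repetitions, controlled by a Chernoff bound; it also runs the Pauli propagation subroutine to obtain $v_{\mathrm{PP}}(C)$ (available since $\mathsf{BPP}\subseteq\mathsf{BQP}$), and compares the two against the midpoint $(a+b)/2$. This decides $\Pi$ with high probability, so $\Pi\in\mathsf{BQP}$.

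The heart of the argument is the $\mathsf{BQP}$-hardness reduction, whose design principle is to \emph{pin} the Pauli-propagation value to a known constant that is independent of an embedded hard computation, so that the oracle's single bit isolates the genuinely quantum quantity. Take the $\mathsf{BQP}$-complete task of deciding $\bra{0^n}N^\dagger Z_1 N\ket{0^n}\in\{\leq 1/3,\ \geq 2/3\}$ for a polynomial-size circuit $N$. Let $P=Z_1 Z_2\cdots Z_n$, a Pauli of weight $n>k$, and let $D$ be a fixed Clifford with $D^\dagger Z_1 D=P$. The reduction outputs the instance $(M,P)$ with $M=D^\dagger N$ and observable $O=P$. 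The crucial asymmetry is that Pauli propagation first represents the observable $P$, whose weight already exceeds $k$, and \emph{immediately truncates it to zero}; hence $v_{\mathrm{PP}}(M)=0$ for \emph{every} circuit, in particular regardless of the hidden $D^\dagger$ sitting inside $M$. The true value, evaluated on the actual unitary, is $v_{\mathrm{true}}(M)=\bra{0^n}M^\dagger P\,M\ket{0^n}=\bra{0^n}N^\dagger Z_1 N\ket{0^n}$, exactly the hard quantity, because the quantum evaluation ``sees through'' the high-weight observable that blinds the truncated classical method. With $a=1/3$, $b=2/3$ and $v_{\mathrm{PP}}=0$, the gap exceeds $b$ on \textsc{yes} inputs and is below $a$ on \textsc{no} inputs, so a solver for $\Pi$ decides the embedded $\mathsf{BQP}$ problem; a $\mathsf{BPP}$ algorithm for $\Pi$ would thus give $\mathsf{BPP}=\mathsf{BQP}$. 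If one instead insists on a \emph{fixed, local} observable $Z_1$ (as in the running text), the same pinning is obtained by appending an \emph{uncompiled} weight-spreading Clifford layer $V$ (a \textsc{cnot} fan-out with $V^\dagger Z_1 V=P$) as the final layer, with the preceding block realizing $V^\dagger N$; the literal gate sequence then ends in $V$, so syntactic back-propagation hits weight $n$ and truncates to $0$, while the true unitary collapses to $N$ and $v_{\mathrm{true}}=\bra{0^n}N^\dagger Z_1 N\ket{0^n}$.

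I expect the main obstacle to be making this ``pinning lemma'' fully rigorous for whichever concrete Pauli-propagation variant is adopted: one must fix the truncation convention precisely and verify that any single Pauli exceeding weight $k$ is discarded and contributes nothing to the final estimate, so that the weight-$n$ observable (or final spreading layer) deterministically forces $v_{\mathrm{PP}}=0$. In the local-observable version this carries the extra, essential requirement that the reduction hand over the \emph{unsimplified} gate list, since the entire separation relies on the classical method acting syntactically rather than first compiling away the redundant $V^\dagger\!\cdots V$ block; this is a feature of the efficient method, not a loophole, because generic efficient Pauli propagation does not perform global circuit optimization. A secondary and routine point is confirming $\mathsf{BQP}$-completeness of the base estimation problem with product input and the stated observable. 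Once the pinning lemma is established, the reduction together with the quantum upper bound yields the claimed conditional separation.
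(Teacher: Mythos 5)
Your overall strategy is sound and genuinely more elementary than the paper's: you pin the heuristic's output to exactly zero by a single weight-blow-up step (a high-weight observable, or equivalently an uncompiled fan-out layer that the syntactic back-propagation cannot cancel), so that the detector's one bit must reveal whether the true expectation value is large, which is $\mathsf{BQP}$-hard. The paper instead achieves the same pinning quantitatively: it appends an $L$-layer random two-qubit circuit $U$ forming an approximate $2$-design and proves a decay lemma showing the truncated back-propagated operator's Frobenius norm shrinks as $(2/5)^L$, so the heuristic output is exponentially small for all inputs; the YES/NO separation of the \emph{true} value is then engineered by a majority-vote-controlled $U^\dagger$ that either undoes $U$ (true value $\pm 1$) or leaves the scrambling $U$ in place (true value $\approx 0$ by the $2$-design property). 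Your route avoids the random-circuit machinery entirely, at the cost of producing instances on which the heuristic is trivially annihilated rather than merely inaccurate; both are legitimate for the informal statement.

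There is, however, a concrete step that fails as written: the claim that with $v_{\mathrm{PP}}=0$ the gap ``is below $a$ on \textsc{no} inputs.'' Your base problem is stated as deciding $\bra{0^n}N^\dagger Z_1 N\ket{0^n}\in\{\le 1/3,\ \ge 2/3\}$, but $\langle Z_1\rangle$ is an expectation value in $[-1,1]$, not an acceptance probability. For the standard (amplified) $\mathsf{BQP}$-complete circuit, $\langle Z_1\rangle = 1-2p$ sits near $-1$ on one side of the promise and near $+1$ on the other, so on \textsc{no} instances $|v_{\mathrm{true}}-v_{\mathrm{PP}}| = |v_{\mathrm{true}}|$ can be close to $1$, which either violates the promise of your $\Pi$ or maps a \textsc{no} instance to a \textsc{yes} instance of $\Pi$. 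What you actually need is a base problem of the form ``$\langle Z_1\rangle \ge 2/3$ versus $|\langle Z_1\rangle|\le 1/3$,'' which is still $\mathsf{BQP}$-hard but requires an extra gadget you have not supplied, e.g.\ a fresh output qubit prepared in $\ket{+}$ with a controlled-Hadamard from the amplified answer qubit, sending accept to $\langle Z\rangle\approx 1$ and reject to $\langle Z\rangle\approx 0$. (Replacing $Z_1$ by the projector $\ketbra{1}{1}$ does not work either, since its identity component survives truncation and shifts $v_{\mathrm{PP}}$ to $1/2$, again failing to separate the two cases.) You flagged the $\mathsf{BQP}$-completeness of the base problem as routine, but this sign/normalization issue is exactly where the reduction breaks; once the gadget is added, the rest of your argument goes through.
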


This theorem reveals a profound circularity in our quest for quantum advantages: the very act of predicting quantum advantages is itself a problem requiring quantum computation. Since Pauli propagation represents only one classical simulation method, predicting quantum advantage against any classical method, including those not yet conceived, is even more challenging. The proof, provided in Appendix~\ref{app:hardness-predict-qadv}, shows that any classical algorithm that can determine efficiently whether or not running a given quantum circuit on a quantum computer has a quantum advantage over a classical heuristic can be adapted to create a classical algorithm that is able to efficiently solve any problem in $\mathsf{BQP}$. Thus, detecting quantum advantage is classically hard, assuming that $\mathsf{BPP} \neq \mathsf{BQP}$. Furthermore, determining whether executing a quantum circuit exhibits quantum advantage is quantumly easy; it suffices to run both the quantum circuit and the classical heuristic on a polynomial number of randomly chosen instances and then check whether or not the results typically agree. 

This establishes a fundamental barrier: classical technology cannot reliably distinguish genuine quantum advantages from pseudo-advantages, confirming that quantum advantages exist beyond the predictive power of classical computers. This leads us to a profound implication:
\begin{center}
\textit{There are plenty of problems with genuine quantum advantages such that\\ we cannot predict the advantage using only classical technology.}
\end{center}
Paradoxically, to fully map out the landscape of quantum advantages, we must use the very quantum technologies whose power we are trying to characterize.

This unpredictability extends beyond computational problems to the frontiers of scientific discovery. Quantum technologies are already kindling new insights into the behavior of highly entangled quantum many-body systems; recent examples include quantum many-body scars and deep thermalization \cite{bernien2017probing, turner2018weak, ippoliti2022solvable, cotler2023emergent}. What marvelous phenomena are yet to be found? We cannot possibly know if our classical technologies are intrinsically incapable of predicting the phenomena. Thus a rich panoply of surprises may be unveiled as we probe the physical universe with ever more advanced quantum devices.

\subsection{The path forward}

We look forward to a future where a broad spectrum of quantum advantages come to light: some empirically discovered, others conceptually transformative, and many fundamentally unpredictable using current technology. Just as the most impactful applications of classical computers were unimaginable to their early pioneers, the true potential of quantum technology might only reveal itself through a journey of development and discovery.

We are not in that future yet. Today's quantum technology is limited, and mathematical analysis provides our best guide for identifying genuine quantum advantages. While mathematical rigor will always be essential, we have argued that theoretical analysis alone cannot capture the full landscape of potential quantum advantages awaiting discovery. For now, we face the challenge of developing new ideas and methods to navigate the vast world of quantum advantage, guided by, but not confined to, what is mathematically provable.

\vspace{0.5em}
\subsection*{Acknowledgments:}
\vspace{-0.5em}
The authors thank Sergio Boixo, Ryan Babbush, Edward Farhi, Sam Gutmann, and Stephen Jordan for a careful reading and helpful comments. We would also like to thank Robbie King and Mikhail Lukin for illuminating discussions, and Chi-Yun Claudia Cheng for discussions and assistance regarding the main figure. The authors acknowledge Mauro Schiulaz and other editors from Physical Review X for insightful discussions that led to the inception of this perspective article.
H.H. acknowledges support from the U.S. Department of Energy, Office of Science, National Quantum Information Science Research Centers, Quantum Systems Accelerator and the Broadcom Innovation Fund.
S.C. acknowledges the support from the Center for Ultracold Atoms, an NSF Physics Frontiers Center (Grant No. 2317134).
J.P. acknowledges support from the U.S. Department of Energy, Office of Science, Accelerated Research in Quantum Computing, Quantum Utility through Advanced Computational Quantum Algorithms (QUACQ) and Fundamental Algorithmic Research toward Quantum Utility (FAR-Qu), from the U.S. Department of Energy, Office of Science, National Quantum Information Science Research Centers, Quantum Systems Accelerator, and from the Institute for Quantum Information and Matter, an NSF Physics Frontiers Center (PHY-2317110).

\cleardoublepage

\begingroup
\setlength{\spaceskip}{0.5em}

\begin{center}
    {\fontsize{24}{30}\selectfont \textbf{Appendices}}
\end{center}

\vspace{0.5cm}
\hrule height 0.5pt
\vspace{0.25cm}
\endgroup

\appendix

\section{Solutions to Puzzles}

\subsection{Puzzle~\ref{puzzle:spooky}: Is this spooky?}
\label{app:spooky-puzzle}

Let us analyze this puzzle in two steps: first considering measurements in a single basis, then examining what changes when we allow measurements in multiple bases.

\vspace{0.5em}\paragraph*{First scenario: Single-basis measurements.} When Alice and Bob only measure in the standard basis $\{\ket{\uparrow}, \ket{\downarrow}\}$, their quantum state
\begin{equation}
    \frac{1}{\sqrt{2}}(\ket{\uparrow} \otimes \ket{\uparrow} - \ket{\downarrow} \otimes \ket{\downarrow})
\end{equation}
produces perfectly correlated outcomes. This correlation appears mysterious. When Alice measures her spin and obtains $\ket{\uparrow}$, Bob's spin instantly collapses to $\ket{\uparrow}$, and similarly for $\ket{\downarrow}$. However, this behavior can be perfectly simulated using classical objects. Consider this classical protocol:
\begin{enumerate}
    \item Take a pair of socks that are either both black or both white with equal probability.
    \item Place each sock in an opaque gift box.
    \item Give one box to Alice and one to Bob.
    \item Let them separate to arbitrary distances.
\end{enumerate}
When Alice and Bob open their boxes, they see perfectly correlated colors instantaneously. The classical protocol exactly reproduces the quantum statistics: each party sees a random outcome (up/down or black/white) with 50-50 probability, and the outcomes are always perfectly correlated.

\vspace{0.5em}\paragraph*{Second scenario: Multiple-basis measurements.} The situation changes dramatically when Alice and Bob can each randomly choose between measuring in the standard basis ${\ket{\uparrow}, \ket{\downarrow}}$ or the rotated basis ${\ket{\nearrow}, \ket{\swarrow}}$. Now they observe correlations that cannot be reproduced by any classical system. While the measurement procedures themselves are quantum, the remarkable feature is that the resulting statistical patterns in their classical outcomes, i.e., the sequences of bits they record, exhibit correlations that no classical mechanism can generate. The classical impossibility was first proven by Bell~\cite{bell1964einstein} and has been experimentally verified multiple times~\cite{aspect1982experimental,freedman1972experimental}. The correlations exhibited by quantum entanglement in multiple measurement bases violate Bell's inequality, providing rigorous proof that quantum systems can achieve statistical correlations among classical outcomes that are forbidden to any classical theory. This elucidates a genuine quantum advantage. We refer readers interested in a detailed pedagogical treatment of Bell's inequality and its implications to Chapter 4 of \cite{preskill1998lecture}.

\subsection{Puzzle~\ref{puzzle:beat-qram}: Exponential quantum advantage in analyzing big data?}\label{app:beat-qram}

The solution to Puzzle~\ref{puzzle:beat-qram} reveals how classical techniques can match quantum computation for this specific task. The key insight is that Bob's classical random access memory (RAM) can be enhanced to provide two crucial capabilities, each taking only $\mathrm{poly}(n)$ time:
\begin{enumerate}
    \item \emph{Query:} Given any indices $k$ and $i$, output the value $x_{k,i}$.
    \item \emph{Sample:} Given index $k$, sample $i$ with probability $p_k(i) = |x_{k,i}|^2$.
\end{enumerate}
The construction of efficient sampling access from standard RAM is a well-known technique in randomized algorithms~\cite{devroye1986non, vose1991linear}. The goal is to sample from a discrete probability distribution where outcome $i$ has probability $p_k(i) = |x_{k,i}|^2$. The naive approach would require $\mathcal{O}(2^n)$ time per sample by iterating through all entries, but preprocessing enables much faster sampling.

For each vector $\vec{x}_k$, we first compute the cumulative distribution function:
\begin{equation}
F_k(j) = \sum_{i=1}^j |x_{k,i}|^2.
\end{equation}
Since $\vec{x}_k$ is normalized, we have $F_k(2^n) = 1$. We then store these cumulative probabilities in a binary tree data structure where each leaf corresponds to an index $i \in \{1, 2, \ldots, 2^n\}$, and each internal node stores the sum of probabilities in its subtree. To sample from this distribution, we generate a uniform random number $r \in [0,1]$ and use binary search to find the unique index $i$ such that $F_k(i-1) < r \leq F_k(i)$. The binary tree structure allows us to traverse from the root to the appropriate leaf in $\mathcal{O}(\log 2^n) = \mathcal{O}(n)$ time: at each internal node, we compare $r$ with the left subtree's total probability and branch left or right accordingly. This preprocessing requires $\mathcal{O}(2^n)$ time per vector to compute all cumulative probabilities and construct the tree, similar to the overhead required by quantum random access memory (QRAM). Once completed, this approach enables $\mathcal{O}(n)$ time sampling, making the sampling step efficient.

Using these capabilities, Bob can employ importance sampling to estimate the inner product efficiently. The key idea is to define a random variable $X_{k,\ell}$ as follows: first sample $i$ according to probability $p_k(i)$, then let $X_{k,\ell} = \frac{x_{\ell,i}}{x_{k,i}}$. A straightforward calculation shows that:
\begin{align}
    \E[X_{k,\ell}] &= \sum_{i=1}^{2^n} p_k(i) \cdot \frac{x_{\ell,i}}{x_{k,i}} = \sum_{i=1}^{2^n} |x_{k,i}|^2 \cdot \frac{x_{\ell,i}}{x_{k,i}} = \vec{x_k} \cdot \vec{x_{\ell}}\\
    \mathrm{Var}[X_{k,\ell}] &= \sum_{i=1}^{2^n} p_k(i) \cdot \left(\frac{x_{\ell,i}}{x_{k,i}}\right)^2 - (\E[X_{k,\ell}])^2 \leq 1
\end{align}
The bounded variance is crucial: it means Bob can estimate the inner product to precision $1/\mathrm{poly}(n)$ by averaging $\mathrm{poly}(n)$ independent samples of $X_{k,\ell}$. This classical randomized algorithm matches the performance of the quantum algorithm exactly.

This solution exemplifies a broader pattern in quantum algorithm analysis. Similar classical techniques have appeared across different fields, including neural quantum states~\cite{carleo2017solving}, distributed computing for inner product estimation~\cite{buhrman2010nonlocality}, randomized linear algebra~\cite{mahoney2011randomized}, and the dequantization of quantum algorithms~\cite{tang2019quantum}. These examples demonstrate how classical probabilistic techniques can sometimes achieve the same performance as quantum algorithms when operating under comparable data-access assumptions.

\subsection{Puzzle \ref{puzzle:Holevo-vs-Raz}: Can quantum systems encode exponential classical information?}
\label{app:Holevo-vs-Raz}

The answer to this puzzle is a nuanced ``yes,'' that reveals a subtle but powerful form of quantum advantage. The apparent conflict between exponential encoding and Holevo's bound is resolved by understanding that the usefulness of an encoding depends entirely on the task for which the information is used. Holevo's bound \cite{holevo1973bounds} states that it is impossible for Bob to learn all $2^n$ classical values $x_i$ from the following $n$-qubit state:
\begin{equation}
    \ket{x} = \sum_{i=1}^{2^n} x_i \ket{i}.
\end{equation}
Any quantum measurement strategy he employs can reveal at most $\mathcal{O}(n)$ bits of classical information about the $2^n$-dimensional vector $\vec{x}$. If Alice's goal were to transmit the entire classical vector to Bob, then quantum state encoding offers no asymptotic advantage over sending $n$ classical bits.

However, the story changes if Bob's goal is not to reconstruct the entire vector, but to perform a specific computation on it. For some computational tasks, the quantum state encoding provides a genuine, exponential advantage in communication. A landmark result by Raz \cite{raz1999exponential} established this rigorously. He defined a relational problem where Alice is given a unit vector $v \in \mathbb{R}^{2^n}$ and Bob is given one of two possible subspaces, $P_0$ or $P_1$, each of dimension $2^{n}/2$. The vector $v$ is promised to lie in one of these two subspaces, and their task is to determine which one. Raz proved that any classical communication protocol for this problem requires exchanging at least $\Omega(2^{n/2})$ bits. In contrast, a quantum protocol can solve the problem with exponentially less communication: Alice simply encodes her vector into an $n$-qubit state $\ket{v}$ and sends it to Bob, who can then perform a measurement to identify the correct subspace. This establishes an exponential separation between the $\mathcal{O}(n)$ quantum communication complexity and the $\Omega(2^{n/2})$ classical communication complexity for this particular task.

This example demonstrates that for certain problems, the quantum state $\ket{x}$ acts as an exponentially compressed ``quantum fingerprint'' of the classical data. While this fingerprint doesn't allow full reconstruction of the original data, it contains enough information to answer specific questions about it exponentially more efficiently than any classical representation of similar size.
Therefore, Alice's quantum compression scheme is not an illusion. It provides a genuine quantum advantage, but this advantage is in \emph{communication complexity} for specific relational problems, not in general-purpose data transmission. This principle has found applications in quantum machine learning, for example, in protocols for learning quadratic functions of large datasets, where quantum communication can be exponentially more efficient than classical communication \cite{gilboa2024exponential}. Alice and Bob did indeed gain an advantage, provided their future data processing tasks are of the right kind.

\section{Fragility of quantum advantage in noisy sensing} \label{app:no_heisenberg}

Here we illustrate the claims in Example \ref{ex:sensing} in a simple model of quantum sensing. We will explain how entanglement can be exploited to improve sensitivity in the noiseless case, and prove that this sensitivity enhancement does not survive in the presence of noise.

\subsection{Entanglement-enhanced sensitivity}

In the noiseless version of this model, our task is to distinguish between two single-qubit unitary channels.  $\mathcal{C}_\theta$ rotates the qubit about the $Z$-axis by the positive angle $\theta$, and $\mathcal{C}_0$ acts trivially:
\begin{align}
    \mathcal{C}_\theta [\rho] = e^{-i \frac{\theta}{2} Z} \rho e^{i \frac{\theta}{2} Z}\;\;\; \textrm{and} \;\;\;
    \mathcal{C}_0 [\rho] = \rho.
\end{align}
The rotation angle $\theta$ is the strength of the signal we wish to detect. Our goal is to detect the signal as quickly as possible; that is, to distinguish between the two cases using as few resources as possible. We will measure sensing efficiency in terms of the \emph{sensing time} $T$, defined as the wall-clock time required to complete the protocol assuming one channel use takes one unit of time. Alternatively, we may fix the sensing time $T$ and seek the smallest signal strength~$\theta$ that can be distinguished from zero. This minimal detectable signal strength is called the \emph{sensitivity}. We are most interested in the regime $0 < \theta \ll 1$, which will be assumed below.

When only a single sensing qubit is available, the optimal sensing protocol is well understood: prepare the qubit in a superposition state $\ket{+} = \frac{1}{\sqrt{2}} \left( \ket{0} + \ket{1} \right)$, apply the channel $T \approx \pi/\theta $ times sequentially, and measure the qubit in the $X$-basis.  Under $\mathcal{C}_0$, the qubit remains in the initial state, and the measurement yields $\ket{+}$ deterministically.
Under $\mathcal{C}_\theta$, the qubit rotates by the total angle $T \theta \approx \pi$, and the measurement produces $\ket{-}$ with high probability, enabling us to distinguish the two cases.
This procedure can be repeated a few times to amplify our confidence.
Thus, one can resolve a small angle $\theta$ within a sensing time of $T = \mathcal{O}(1/\theta)$.

This analysis directly generalizes to $N$ sensing qubits working in parallel. We now consider the channel $\mathcal{C}_x^{\otimes N}$ with $x=0$ or $\theta$, which applies the single-qubit channel $\mathcal{C}_x$ to all $N$ qubits simultaneously.
In this case, one can readily shorten the sensing time to $T=\mathcal{O}\left( \frac{1}{\sqrt{N} \theta }\right)$ by performing $N$ copies of the single-particle protocol in parallel.
Here, $T$ represents the number of times the parallel channel $\mathcal{C}_x^{\otimes N}$ is applied, which equals the wall-clock time since all $N$ qubits are processed simultaneously in each time step.
More explicitly, each particle is prepared in the $\ket{+}$ state, the channel $\mathcal{C}_x^{\otimes N}$ is applied $T =  \lceil \frac{\pi }{\sqrt{N} \theta}\rceil$ times, and then all qubits are measured.
Recall that $T$, the number of (parallel) channel uses, is an integer; we are implicitly considering the case where $\sqrt{N}\theta\ll 1$.
The measurement of each particle yields $\ket{-}$ with probability $\Omega\left( 1/N\right)$ when $x=\theta$. Since a single $\ket{-}$ outcome heralds the presence of the signal, the two channels can be distinguished with high confidence after a constant number of repetitions.
Thus, this protocol achieves a sensitivity scaling as
\begin{align}
    \theta^{-1} = \mathcal{O} \left( \sqrt{N} T\right),
\end{align}
which one can show is optimal for any protocol that does not use entangled states.

Using entangled quantum states, the sensing time can be further improved to $T=\mathcal{O}\left( \frac{1}{N\theta}  \right)$, where we now assume $N\theta \ll 1$.
Consider the GHZ state $\ket{\textrm{GHZ};+}= \left( \ket{0}^{\otimes N} + \ket{1}^{\otimes N}\right)/\sqrt{2}$ evolved under the parallel channel $\mathcal{C}_\theta^{\otimes N}$ repeated $T=\lceil \frac{\pi}{N\theta} \rceil$ times.
The resulting state has a large overlap with $\ket{\textrm{GHZ};-}$, which is orthogonal to the initial state.
Thus, by performing measurements in the $\{\ket{\textrm{GHZ};\pm}\}$ basis, one can distinguish the two channels within time $T=\mathcal{O}\left( \frac{1}{N\theta} \right)$.
The reduction in sensing time by the factor $\sqrt{N}$ translates to an improved sensitivity arising from the multipartite entanglement in the GHZ state:
\begin{align}
    \theta^{-1} = \mathcal{O}\left( N T \right).
\end{align}
One can show that this is the optimal sensitivity scaling using $N$ sensing qubits. This entanglement-enhanced sensitivity is an example of quantum sensing advantage.

\subsection{Fragility of entanglement-enhanced sensitivity in noisy sensors}

We now consider a simple model of noisy quantum sensing. In realistic settings, noise arises from a variety of physical sources such as thermal fluctuations in the environment, imperfections in laser beams or microwave control pulses, and intrinsic physical properties of atoms or molecules. In our simplified model, sensing qubits are subject to rotation by an angle $\theta + \theta_n$, where $\theta$ is the signal to be detected and $\theta_n$ represents noise. The signal strength $\theta$ is assumed to be small and has the same value every time $\mathcal{C}_\theta$ is applied. The noise $\theta_n$ afflicts both $\mathcal{C}_\theta$ and $\mathcal{C}_0$ and is randomly fluctuating. We assume that $\theta_n$ is drawn independently from a Gaussian distribution with mean zero and variance $\gamma$ each time either channel is applied. Our task is to determine whether $\theta = 0$ or not.

More explicitly, we need to distinguish between two channels: 
\begin{align}
     \tilde{\mathcal{C}}_{\theta} &\equiv \mathbb{E}_{\theta_n} \left[ \mathcal{C}_{\theta+\theta_n}\right] = ((1-p)\mathcal{I} + p \mathcal{D}) \circ \mathcal{C}_\theta,\\
    \tilde{\mathcal{C}}_0 &\equiv \mathbb{E}_{\theta_n} \left[ \mathcal{C}_{\theta_n}\right] = ((1-p)\mathcal{I} + p \mathcal{D})  \circ \mathcal{C}_0,
\end{align}
where $p = 1-e^{-\gamma/2}$ is the probability that the qubit dephases, $\mathcal{I}$ is the identity channel $\mathcal{I}[\rho] = \rho$, and $\mathcal{D} [\rho] = \ketbra{0}{0}\rho \ketbra{0}{0} + \ketbra{1}{1} \rho \ketbra{1}{1}$ is the completely dephasing channel in the $Z$-basis. In other words, the density operator's diagonal elements in the $Z$-basis are unaffected by the Gaussian dephasing noise, while off-diagonal elements are suppressed by the factor $e^{-\gamma/2}$. We are especially interested in the high-sensitivity regime where $\theta \ll \gamma$.

For this noisy quantum sensing model, a lower bound on the sensing time is captured by the following theorem.

\begin{theorem}[Noisy small-angle detection]\label{thm:noisy_small_angle_detection}
Any quantum algorithm that correctly distinguishes two quantum channels $\tilde{\mathcal{C}}_0^{\otimes N}$ and $\tilde{\mathcal{C}}_\theta^{\otimes N}$ with probability $> 2/3$ must query the channel at least $T$ times, where  
\begin{align}
    NT = \Omega\left(\max\left( \frac{\gamma}{\theta^2}, \frac{1}{\theta} \right)\right).
\end{align}
We note that quantum algorithms may involve arbitrarily many and arbitrarily fast gate operations as well as an unlimited number of ancilla qubits and unbounded memory coherence times.
\end{theorem}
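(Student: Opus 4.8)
The plan is to convert the channel-discrimination task into a lower bound on the quantum Fisher information (QFI) of the protocol's output, and then to upper bound the attainable QFI in two complementary ways. Regard the algorithm as fixed and let $\rho_\lambda$ denote its final state (on system plus ancillas) when the true signal is $\lambda$, so the two hypotheses correspond to $\rho_0$ and $\rho_\theta$. By the Helstrom bound, success probability $>2/3$ forces $\lVert \rho_0-\rho_\theta\rVert_1 > 2/3$. On the other hand, writing $L_\lambda$ for the symmetric logarithmic derivative with $\partial_\lambda\rho_\lambda=\tfrac12(L_\lambda\rho_\lambda+\rho_\lambda L_\lambda)$ and $F_Q(\lambda)=\Tr(\rho_\lambda L_\lambda^2)$, a Cauchy–Schwarz estimate gives $\lVert \partial_\lambda\rho_\lambda\rVert_1 \le \lVert L_\lambda\rho_\lambda^{1/2}\rVert_2\,\lVert\rho_\lambda^{1/2}\rVert_2 = \sqrt{F_Q(\lambda)}$, so that $\lVert\rho_0-\rho_\theta\rVert_1 \le \int_0^\theta\sqrt{F_Q(\lambda)}\,d\lambda \le \theta\sqrt{\max_\lambda F_Q(\lambda)}$. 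Combining the two inequalities, any successful protocol must achieve $\max_\lambda F_Q(\lambda) = \Omega(1/\theta^2)$. It therefore suffices to upper bound, uniformly in $\lambda$, the QFI that $NT$ total uses of the single-qubit channel can produce.

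The first, elementary bound comes from dilating the dephasing into an inaccessible environment. Since $\tilde{\mathcal{C}}_\lambda = \mathbb{E}_{\theta_n}[\mathcal{C}_{\lambda+\theta_n}]$ is a Gaussian-averaged $Z$-rotation, the whole experiment purifies to a state $\ket{\Psi_\lambda}$ in which $\lambda$ enters only through $NT$ commuting single-qubit generators, one per qubit per round, interleaved with $\lambda$-independent unitaries. Hence $\partial_\lambda\ket{\Psi_\lambda} = -\tfrac{i}{2}G\ket{\Psi_\lambda}$ with $G$ a sum of $NT$ Heisenberg-rotated copies of single $Z$ operators, so $\lVert G\rVert \le NT$ and $F_Q = \mathrm{Var}_{\Psi_\lambda}(G) \le \lVert G\rVert^2 \le (NT)^2$. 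With the reduction above this already yields $NT = \Omega(1/\theta)$, the noiseless Heisenberg-type half of the bound.

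The crux is the second, noise-limited bound, which must exploit that the signal generator $Z$ lies in the span $\{I,Z\}$ of the dephasing Kraus operators (the Hamiltonian-in-Lindbladian-span situation). Writing $\tilde{\mathcal{C}}_\lambda$ with Kraus operators $K_0(\lambda)=\sqrt{1-q}\,e^{-i\lambda Z/2}$ and $K_1(\lambda)=\sqrt{q}\,Z e^{-i\lambda Z/2}$, where $1-2q=e^{-\gamma/2}$, I would invoke the adaptive channel-QFI bound $F_Q \le 4\bigl(M\lVert\alpha_h\rVert + M(M-1)\lVert\beta_h\rVert^2\bigr)$ for $M=NT$ single-qubit uses, where $\alpha_h,\beta_h$ are the standard single-use operators carrying the gauge freedom $h$ in the Kraus representation. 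A direct computation gives $\beta = -Z/2$, which can be cancelled (so $\beta_h=0$, killing the $M^2$ term) by a Hermitian gauge $h$ satisfying $2\,\mathrm{Re}(h_{01})\sqrt{q(1-q)} = -\tfrac12$; the residual coefficient then evaluates to $\lVert\alpha_h\rVert = \Theta(1/q)=\Theta(1/\gamma)$ in the regime $\gamma\ll 1$. This gives $F_Q = O(NT/\gamma)$ and hence $NT = \Omega(\gamma/\theta^2)$. Taking the stronger of the two conclusions proves $NT = \Omega(\max(\gamma/\theta^2,\,1/\theta))$, matching the claim and recovering the advertised $\gamma/\theta^2$ scaling precisely in the high-sensitivity regime $\theta \ll \gamma$.

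The main obstacle is this third step. It requires establishing the channel-QFI upper bound for the full protocol class allowed in the theorem — unbounded ancillas, adaptive interleaving, and the genuinely parallel action of $\tilde{\mathcal{C}}_\lambda^{\otimes N}$ — and in particular justifying that a parallel round on $N$ entangled qubits can be folded into the sequential bound with $M=NT$ single-qubit uses without opening a loophole that restores the Heisenberg ($M^2$) scaling. The physically meaningful content is that the gauge can annihilate $\beta_h$ precisely because the noise dephases along the same axis as the signal; if instead one kept $\beta_h\neq 0$, the surviving $M^2$ term would merely reproduce the noiseless bound of the second step rather than improve it, so the optimization over the remaining gauge freedom must be carried out carefully to certify that $\Theta(1/\gamma)$ is the genuine asymptotic coefficient.
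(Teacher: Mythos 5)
Your proposal is correct in outline, but it takes a genuinely different route from the paper. The paper's argument is elementary and self-contained: it exploits the fact that the noise is modeled as a classically observable random rotation angle, introduces a hypothetical agent (``Demigod'') who is handed the $NT$ actual angles $\theta+\theta_{n,i}$ or $\theta_{n,i}$ and can therefore simulate any quantum protocol, and then reduces the task to classically distinguishing $\mathcal{N}(0,\gamma)$ from $\mathcal{N}(\theta,\gamma)$ from $NT$ i.i.d.\ samples, which requires $\Omega(1/D_{\mathrm{KL}}) = \Omega(\gamma/\theta^2)$ samples; the $\Omega(1/\theta)$ half is inherited from the noiseless case by noting that noisy channel access can be simulated from noiseless access. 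Your route instead converts the Helstrom condition $\lVert\rho_0-\rho_\theta\rVert_1>2/3$ into $\max_\lambda F_Q(\lambda)=\Omega(1/\theta^2)$ and then upper-bounds the channel QFI: the purification/generator-norm argument for $F_Q\le (NT)^2$, and the Kraus-gauge bound $F_Q\le 4(M\lVert\alpha_h\rVert+M(M-1)\lVert\beta_h\rVert^2)$ with $\beta_h=0$ achievable (since the signal generator lies in the span of the dephasing Kraus products) and $\lVert\alpha_h\rVert=\Theta(1/\gamma)$, giving $F_Q=O(NT/\gamma)$. Your calculations check out — the gauge condition $2\,\mathrm{Re}(h_{01})\sqrt{q(1-q)}=-1/2$ with $1-2q=e^{-\gamma/2}$ is the right one, and parallel rounds do fold into the sequential adaptive bound with $M=NT$ since parallel strategies are a subclass of adaptive sequential ones — but the step you flag as the main obstacle is real work: you are invoking the adaptive channel-QFI theorem (Demkowicz-Dobrza\'nski--Maccone type, underlying the HNLS framework of the cited no-go results) as a black box, whereas the paper deliberately avoids that machinery. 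What each approach buys: yours is more general (it applies to any dephasing with this Kraus structure, not only noise arising from a classical random angle, and it actually supplies a proof of the noiseless $\Omega(1/\theta)$ half rather than citing it), while the paper's classical-simulation argument is shorter, fully self-contained, and transparent about where the $\gamma/\theta^2$ scaling comes from (Gaussian KL divergence).
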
 

\noindent 
Theorem \ref{thm:noisy_small_angle_detection} implies that for any small constant noise strength $\gamma$, the sensing time $T$ scales inverse quadratically in the signal strength $\theta$. Thus, for any constant $\gamma$, the asymptotic sensitivity scales as 
\begin{align}
    \theta^{-1} = \mathcal{O} \left( \sqrt{NT}\right).
\end{align}
Importantly, this optimal sensitivity scaling can be attained by a simple, fully separable protocol that requires no entanglement, demonstrating that entanglement does not offer an asymptotic sensitivity advantage in this noisy quantum sensing model. Intuitively, dephasing noise destroys the phase coherences that enable entangled states like GHZ to accumulate phase collectively, eliminating the quantum advantage observed in the noiseless case.

The simple, fully separable protocol works as follows. We focus on the regime $\theta, \gamma \ll 1$ and $\theta \ll \gamma$ to simplify our analysis. Each sensing qubit is initialized in the $\ket{+}$ state, undergoes evolution under the channel $\lceil 1/\gamma \rceil$ times, and is then measured in the $Y$-basis, yielding one of two outcomes:
$\ket{\pm i} = \frac{1}{\sqrt{2}} \left( \ket{0} \pm i\ket{1} \right)$.
This procedure is repeated $K$ times for $N$ qubits in parallel, producing a total of $KN$ measurement outcomes within sensing time $T = K \lceil 1/\gamma \rceil$. Finally, one counts the fraction of $\ket{+i}$ outcomes, $\hat{f}$, among all measurements.

In the absence of the signal, both outcomes $\ket{\pm i}$ occur with exactly the same probabilities, hence the probability distribution of $\hat{f}$ is centered around $\frac{1}{2}$ with variance $\frac{1}{4KN}$. In the presence of the signal, the measurement yields $\ket{\pm i}$ with probabilities $\frac{1}{2} \pm \epsilon$ with positive bias $\epsilon = \sin (\theta \lceil 1/\gamma \rceil) e^{-\gamma \lceil 1/\gamma \rceil/2}/2$, and the distribution of $\hat{f}$ is centered around $\frac{1}{2} + \epsilon$ with variance $\frac{1-4\epsilon^2}{4KN}$. Therefore, by checking whether $\hat{f}$ is larger than a threshold $\frac{1}{2} +\frac{\epsilon}{2}$, one can distinguish the two cases. The failure probability can be bounded using Chebyshev's inequality and becomes small whenever the standard deviation is sufficiently small compared to the bias: $\epsilon \gg \frac{1}{\sqrt{4KN}}$, which implies $NT = \mathcal{O} \left( \frac{e^{\gamma \lceil 1/\gamma \rceil} / \lceil 1/\gamma \rceil}{\theta^2}\right)$ is sufficient. This matches the optimal sensitivity scaling for any fixed constant $\gamma$.

We now proceed to the proof of Theorem \ref{thm:noisy_small_angle_detection}.

\begin{proof} 
We prove two lower bounds separately: (i) $NT = \Omega (\gamma/\theta^2)$ and (ii) $NT = \Omega(1/\theta)$.
The second bound follows directly from the noiseless case. This can be shown by noting that access to noiseless channels is at least as powerful as access to noisy channels because the latter can be simulated from the former by artificially adding dephasing channel after the noiseless channel. As a result, the sensing time for the noisy case cannot be shorter than the noiseless case, which requires $NT = \Omega(1/\theta)$.

To prove the first bound, we employ a reduction to classical hypothesis testing. We introduce a hypothetical agent called \emph{Demigod}, who is at least as powerful as any quantum agent capable of running arbitrary quantum algorithms. We show that even the powerful Demigod requires $NT = \Omega(\gamma/\theta^2)$ channel uses to distinguish the two channels. Thus, quantum agents, who are no more powerful than Demigod, require at least as many channel uses as Demigod to perform the distinguishing task.

We define the hypothetical agent Demigod to have the following enhanced capabilities:
\begin{itemize}
\item \textbf{Unlimited computational power:} Demigod can perform arbitrary classical or quantum computations instantaneously and with unlimited memory.
\item \textbf{Direct observation of rotation angles:} While quantum agents are given the ability to apply noisy quantum channels to quantum states, Demigod directly observes the noisy rotation angles, which are either $\theta_n$ or $\theta + \theta_n$ for signal $\theta$ and noise $\theta_n$.
\end{itemize}
More precisely, when the unknown channel is $\tilde{\mathcal{C}}_\theta^{\otimes N}$, each query provides Demigod with $N$ real numbers $\{\theta+\theta_{n,1}, \theta+\theta_{n,2}, \ldots, \theta+\theta_{n,N}\}$, where each $\theta_{n,i}$ is independently drawn from $\mathcal{N}(0,\gamma)$ with mean zero and variance $\gamma$. When the unknown channel is $\tilde{\mathcal{C}}_0^{\otimes N}$, she receives the list $\{\theta_{n,1}, \theta_{n,2}, \ldots, \theta_{n,N}\}$. 

Since Demigod can query the rotation angles, she can perfectly simulate any quantum agent. To simulate a query to the noisy quantum channel $\tilde{\mathcal{C}}_\theta^{\otimes N}$ or $\tilde{\mathcal{C}}_0^{\otimes N}$ made by the quantum agent, she uses her enhanced ability to directly observe the list of rotation angles and rotates each of her $N$ sensing qubits about the $Z$-axis by the corresponding observed angle. Due to her unlimited computational power, she can also simulate any quantum computation performed by the quantum agent. Thus, she can perfectly simulate any quantum agent regardless of the quantum state preparation, evolution, channel query, or measurement strategy employed by the original quantum algorithm. Therefore, Demigod is at least as powerful as any quantum agent who tries to distinguish the two noisy channels.

After $T$ queries to the parallel channel, Demigod has collected $NT$ rotation angles, each angle is independently and identically drawn from either $P = \mathcal{N}(0,\gamma)$ or $Q = \mathcal{N}(\theta,\gamma)$ depending on whether the noisy channel is $\tilde{\mathcal{C}}_0$ or $\tilde{\mathcal{C}}_\theta$, respectively. For Demigod, the channel distinguishing problem reduces to the classical statistical problem of distinguishing between these two Gaussian distributions.
The fundamental limit for this classical hypothesis testing problem is well-established in information theory. To distinguish between $P$ and $Q$ with constant success probability, the number of samples must satisfy:
\begin{align}
\text{Number of samples} = \Omega\left(\frac{1}{D_{\text{KL}}(P \| Q)}\right),
\end{align}
where $D_{\text{KL}}(P \| Q)$ is the Kullback-Leibler divergence. For Gaussian distributions:
\begin{align}
D_{\text{KL}}(\mathcal{N}(0,\gamma) \| \mathcal{N}(\theta,\gamma)) = \frac{\theta^2}{2\gamma}.
\end{align}
Therefore, even Demigod requires at least $\Omega(\gamma/\theta^2)$ samples, establishing $NT = \Omega(\gamma/\theta^2)$. This lower bound on the number of samples applies no matter how much classical or quantum computing power Demigod applies to her task. Taking the maximum of both bounds yields $NT = \Omega\left(\max\left( \frac{\gamma}{\theta^2}, \frac{1}{\theta} \right)\right)$. Thus, we obtain the desired lower bound on $NT$, completing the proof of Theorem \ref{thm:noisy_small_angle_detection}.
\end{proof}

\section{A Classical Heuristic for Simulating Quantum Circuits} \label{app:lowweight}

\subsection{Overview and Context}

The low-weight Pauli propagation algorithm~\cite{aharonov2023polynomial, shao2023simulating, nemkov2023fourier, beguvsic2023fast, beguvsic2023simulating, fontana2023classical, rudolph2023classical, angrisani2024classically, dowling2024magic} provides a powerful classical heuristic for simulating quantum circuits. The key idea is to truncate high-weight Pauli terms during the simulation, maintaining only the most important contributions. Consider a quantum circuit with $L$ layers:
\begin{equation}\label{eq:circuit}
    U = U_L U_{L-1} \cdots U_1,
\end{equation}
Our goal is to approximate expectation values of the form:
\begin{equation}\label{eq:expectation}
    f_U(O) = \Tr[U\rho U^\dagger O],
\end{equation}
where $\rho$ is an input state and $O$ is an observable.

\subsection{Algorithm Description}
The low-weight Pauli propagation algorithm proceeds in the Heisenberg picture, where the observable is evolved backward through the circuit.
\begin{enumerate}
\item We begin by initializing the operator $O_L$ as the projection of the observable $O$ onto the subspace of Pauli operators with weight at most $k$. Here, the weight of a Pauli operator is the number of non-identity terms it contains.
\item Working backwards from $j = L$ down to $1$, we iteratively apply the following two steps:
    \begin{enumerate}
        \item Evolve the operator through the $j$-th layer: $O'_{j-1} = U_j^\dagger O_j U_j$.
        \item Project the result back onto the low-weight Pauli basis to get the new operator:
        \begin{equation}\label{eq:propagate}
            O_{j-1} = \sum_{\substack{P \in \{I,X,Y,Z\}^{\otimes n}\\ |P| \leq k}} \frac{\Tr[O'_{j-1} P]}{2^n} P.
        \end{equation}
    \end{enumerate}
\item The final estimate for the expectation value is computed using the fully evolved and truncated operator $O_0$ and the initial state $\rho$:
    \begin{equation}
        \text{Estimate} = \Tr[O_0 \rho].
    \end{equation}
\end{enumerate}
The larger $k$ is, the more accurate the low-weight Pauli propagation algorithm is. In practice, one can typically consider $k$ to be a small constant. Even $k=1$ can perform quite well in many cases. This completes the full description of the low-weight Pauli propagation algorithm.

\subsection{Theoretical Guarantees}
When each circuit layer $U_j$ is drawn from a \emph{locally scrambling} distribution~\cite{kuo2020markovian, hu2021classical, caro2022outofdistribution, gibbs2024dynamical, huang2022learning}, which corresponds to a probability distribution that remains invariant under single-qubit Clifford rotations, \cite{angrisani2024classically} has established the following theorem to provide a rigorous guarantee for the low-weight Pauli propagation algorithm. Intuitively, the locally scrambling property means each layer is generic on a local scale and mixes the local basis.

\begin{theorem}[Time Complexity]\label{thm:complexity}
Let $U$ be sampled from an $L$-layered locally scrambling circuit ensemble on $n$ qubits. For any error tolerance $\epsilon$ and failure probability $\delta$, there exists a classical algorithm with runtime $L n^{\mathcal{O}(\log(\epsilon^{-1}\delta^{-1}))}$ that outputs an estimate $\alpha$ satisfying:
\begin{equation}\label{eq:error}
    |\alpha - f_U(O)| \leq \epsilon \|O\|_{\mathrm{Pauli},2}
\end{equation}
with probability at least $1-\delta$. $\|O\|_{\mathrm{Pauli},2} = (2^{-n}\Tr[O^\dagger O])^{1/2}$ is the normalized Hilbert-Schmidt norm.
\end{theorem}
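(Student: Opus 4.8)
The plan is to control the error introduced by the weight-$\le k$ truncations and to show it is small in a second-moment sense over the ensemble, then upgrade to a high-probability guarantee. Working in the Heisenberg picture as the algorithm does, write the exactly evolved observable as $\Phi_1\cdots\Phi_L(O)$ with $\Phi_j(A)=U_j^\dagger A U_j$, and the algorithm's output as $\Tr[O_0\rho]$ where $O_0 = \mathcal{T}_k\Phi_1\mathcal{T}_k\cdots\Phi_L\mathcal{T}_k(O)$ and $\mathcal{T}_k$ is the projection onto Pauli operators of weight $\le k$. Because conjugation by $U_j$ is an $\ell_2$ isometry in the Pauli basis, a standard hybrid/telescoping argument expresses the error $f_U(O)-\Tr[O_0\rho]$ as a sum of contributions, one per truncation step, each being the weight-$>k$ piece discarded at that step and then propagated forward and contracted against $\rho$. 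The entire task thus reduces to bounding the $\ell_2$ mass that leaks above weight $k$ and its overlap with $\rho$.

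The central idea is to pass to the second moment $\mathbb{E}_U|f_U(O)-\Tr[O_0\rho]|^2$ and exploit the locally scrambling hypothesis. Invariance of each layer's distribution under single-qubit Clifford rotations means that averaging the square annihilates every off-diagonal coherence between distinct Pauli operators: on each qubit the twirl sends a non-identity Pauli to a uniformly random element of $\{X,Y,Z\}$ and fixes the identity, so only the \emph{support} of each Pauli survives the average. This is precisely what is needed, since for a fixed circuit the overlap with $\rho$ is only controlled in $\ell_1$ (as $|\Tr[P\rho]|\le 1$), whereas after the twirl the cross terms vanish and the overlap is replaced by the $\ell_2$ mass weighted by factors $(\Tr[Q_0\rho])^2\le 1$. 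Concretely, $\mathbb{E}_U|\cdots|^2$ collapses into a manifestly non-negative sum over single Pauli paths $(Q_L,\ldots,Q_0)$, weighted by the squared Pauli coefficients $|a_{Q_L}|^2$ of $O$, by a terminal factor $(\Tr[Q_0\rho])^2$, and by a product of per-layer transition factors that form a genuine (sub)stochastic kernel on Pauli operators. In this representation the second-moment error is exactly the total path-measure carried by paths that visit weight $>k$ at some layer.

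The crux, and the step I expect to be hardest, is the concentration lemma: the path measure assigns only $c^{-\Omega(k)}$ of its mass, relative to $\|O\|_{\mathrm{Pauli},2}^2$, to paths that ever exceed weight $k$, \emph{uniformly in the depth} $L$. The mechanism is that the second-moment transfer kernel of a locally scrambling layer is weight-non-increasing on average, so a single layer cannot push $\ell_2$ mass to high weight faster than a controlled rate, and high-weight excursions are geometrically penalized. The delicate point is the depth-independence: a naive union bound over the $L$ truncations would cost a factor $L$ and force $k$ to grow with depth, which the claimed runtime forbids. One must instead argue via a contraction/absorption property of the induced Markov chain on Pauli supports — for instance a Lyapunov or martingale estimate on the weight process — that the tail of a weight excursion is governed by the geometric penalty alone, independent of how many layers are traversed. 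Verifying that the per-layer kernel really is (sub)stochastic with this penalty, for the precise notion of locally scrambling in use, is the technical heart feeding this estimate.

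Granting the lemma, the remaining steps are routine. Choosing $k=\mathcal{O}(\log(\epsilon^{-1}\delta^{-1}))$ makes $\mathbb{E}_U|f_U(O)-\Tr[O_0\rho]|^2\le \epsilon^2\delta\,\|O\|_{\mathrm{Pauli},2}^2$, whereupon Markov's inequality gives $|f_U(O)-\Tr[O_0\rho]|\le \epsilon\,\|O\|_{\mathrm{Pauli},2}$ with probability at least $1-\delta$. For the runtime, the truncated operator carries at most $\sum_{w\le k}\binom{n}{w}3^w = n^{\mathcal{O}(k)}$ Pauli terms; since each $U_j$ is geometrically local, applying $\Phi_j$ and re-truncating touches only $n^{\mathcal{O}(k)}$ terms per layer, for a total of $L\,n^{\mathcal{O}(\log(\epsilon^{-1}\delta^{-1}))}$ operations, exactly as claimed.
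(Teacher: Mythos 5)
First, a point of reference: the paper itself does not prove Theorem~\ref{thm:complexity}; it imports it verbatim from \cite{angrisani2024classically}, so there is no in-paper proof to compare your attempt against. Judged on its own terms, your sketch reproduces the correct architecture of the argument in that reference: a hybrid/telescoping decomposition of the truncation error, passage to the second moment over the locally scrambling ensemble so that the single-qubit Clifford twirl kills all cross terms between distinct Pauli operators and reduces everything to a non-negative sum over Pauli paths, a final application of Markov's inequality with $k=\mathcal{O}(\log(\epsilon^{-1}\delta^{-1}))$, and the $n^{\mathcal{O}(k)}$ term count for the runtime. Those parts are sound, and you correctly identify that a naive union bound over the $L$ truncations would be fatal.

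The genuine gap is the one you flag yourself: the depth-independent bound $\mathbb{E}_U[(f_U(O)-\alpha)^2]\le 2^{-\Omega(k)}\|O\|_{\mathrm{Pauli},2}^2$ is asserted rather than proven, and the mechanism you propose for it does not work as stated. The second-moment kernel of a scrambling layer is \emph{not} weight-non-increasing on average (a weight-one Pauli hit by a Haar two-qubit gate has expected output weight $6/15\cdot 1+9/15\cdot 2=1.6$), and no decreasing function of the weight is superharmonic for the exact dynamics: since a Haar two-qubit gate sends a weight-two Pauli to weight one with probability $6/15$, any $g(w)=c^{-w}$ with $c>1$ satisfies $\tfrac{6}{15}g(1)+\tfrac{9}{15}g(2)>g(2)$, so the ``Lyapunov/martingale estimate on the weight process'' you invoke does not exist in that form. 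Concretely, $\mathbb{E}_U[(\Tr[\rho\, U^\dagger Z_1Z_2 U])^2]=1/5>1/9$ already for one Haar two-qubit gate and $\rho=\ketbra{00}{00}$, so the forward-propagated discarded piece is not controlled by $3^{-(k+1)}$ per Pauli in the way a per-layer contraction would suggest. Closing the gap requires combining three separate ingredients: (i) orthogonality under the twirl of the contributions from different truncation steps, so their squared masses add; (ii) the telescoping identity that the total $\ell_2$ mass discarded across all $L$ truncations is at most $\|O\|_{\mathrm{Pauli},2}^2$ (this, not a per-layer contraction, is what removes the factor of $L$); and (iii) a trade-off showing that a discarded Pauli of weight exceeding $k$ can reach a low terminal weight $w$, where the terminal factor $\mathbb{E}[(\Tr[Q_0\rho])^2]\le 3^{-w}$ is large, only by paying a probability cost exponentially small in $k-w$, so the product is $2^{-\Omega(k)}$ uniformly over the excursion. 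Item (iii) is the technical heart of \cite{angrisani2024classically} and is the step missing from your proposal; note also that the terminal bound $\mathbb{E}[(\Tr[Q_0\rho])^2]\le 3^{-|Q_0|}$ uses that $\rho$ is a product state, an assumption you should make explicit.
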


This theorem establishes three key properties of the algorithm:
\begin{enumerate}
\item The runtime is polynomial in system size $n$ and circuit depth $L$ for any small constant $\epsilon, \delta$.
\item The accuracy parameters $\epsilon$ and $\delta$ appear only logarithmically in the exponent.
\item The error scales naturally with the observable's normalized Hilbert-Schmidt norm.
\end{enumerate}

\subsection{Applicability and Scope}

This theoretical result encompasses various quantum circuit architectures. For the computational complexity bounds to hold, we require that for any Pauli operator $P$ of weight $k$, its Heisenberg evolution $U_j^\dag P U_j$ contains at most $n^{\mathcal{O}(k)}$ distinct Pauli terms and can be computed classically in time $n^{\mathcal{O}(k)}$. Intuitively, this condition ensures each circuit layer $U_j$ remains reasonably shallow. The analysis applies to several important classes of quantum circuits:
\begin{enumerate}
\item Brickwork circuits of $SU(4)$ gates in 1D, 2D, and 3D~\cite{zhang2023absence, napp2022quantifying, braccia2024computing}.
\item Circuits with universal single-qubit rotations followed by entangling Clifford gates~\cite{letcher2023tight}.
\item Quantum Convolutional Neural Networks without feed-forward~\cite{pesah2020absence}.
\end{enumerate}
Despite this broad applicability, the fundamental limitations of this approach warrant careful consideration. The low-weight Pauli propagation method remains inherently a classical heuristic. Its theoretical guarantees extend only to \emph{most} quantum circuits where each layer is drawn from a locally scrambling distribution, rather than to arbitrary quantum circuits.

When presented with a specific quantum circuit of interest, determining whether this classical heuristic will yield accurate simulation results is a non-trivial challenge. This uncertainty points to a fundamental insight: as we demonstrate in the next section, the very task of determining whether a given quantum circuit exhibits an advantage over this classical heuristic constitutes a problem that itself admits a quantum advantage.

\section{Classical hardness in predicting quantum advantages}\label{app:hardness-predict-qadv}

There are many classical heuristics for simulating quantum circuits and quantum algorithms.
Some families of quantum circuits can be efficiently simulated using classical methods, nullifying any potential quantum advantage. However, the task of determining whether a given problem exhibits a quantum advantage can itself have a quantum advantage. Hence, even if a quantum circuit is classically simulatable using a specific classical heuristic, we might still need a quantum computer to determine or verify that the classical heuristic can indeed simulate the quantum circuit accurately.

\begin{definition}[A classical heuristic for simulating quantum circuits]
A classical algorithm $\mathcal{A}$ is called a classical heuristic for simulating a class of polynomial-size quantum circuits if:
\begin{enumerate}
    \item It runs in classical polynomial time.
    \item It correctly approximates the output probability $C(x)$ for a subclass of polynomial-size quantum circuits $C$, but it is known to fail, or is not proven to succeed, for some circuits in this class and generally the set of circuits for which it fails is unknown.
\end{enumerate}
The algorithm is considered a heuristic because there is a gap between its practical performance on many instances and the lack of a guarantee for universal correctness within its target problem class.
\end{definition}

A powerful example is the low-weight Pauli propagation heuristic, \textsc{LowWeightPauliProp}, discussed in Appendix~\ref{app:lowweight}. It is known to be effective for most circuits in a measure-theoretic sense (a large subclass) but is also known to fail for others, fitting our definition perfectly. Given that a classical heuristic may succeed for some circuits and fail for others, we need a precise, quantitative way to distinguish these cases. We formalize this by defining what it means for a quantum circuit to have a ``computational advantage'' over a classical heuristic. This occurs when the classical heuristic fails to provide a good approximation for a significant fraction of possible inputs.

\begin{definition}[Quantum advantage over a classical heuristic]
We say a quantum computer executing a quantum circuit $C$ exhibits a computational advantage over a classical heuristic $\mathcal{A}$ if
\begin{equation}
|\mathcal{A}_C(x) - C(x) | \geq 1/3,
\end{equation}
for at least a $2/3$ fraction of the inputs $x \in \{0, 1\}^n$. Conversely, we say it has no advantage if $|\mathcal{A}_C(x) - C(x)| < 1/3$ for at least a $2/3$ fraction of inputs.
\end{definition}

The specific constants $1/3$ and $2/3$ are chosen by convention, following standard practice in computational complexity theory to create a mathematically convenient definition. The threshold of $1/3$ for the error ensures that a heuristic's prediction is significantly wrong, not just slightly inaccurate. The requirement that this occurs for a $2/3$ fraction of inputs establishes that the failure is typical for the circuit, not a rare exception. Together, these values create a large, constant-sized gap between the ``advantage'' case, where the heuristic is wrong for a clear majority of inputs, and the ``no advantage'' case, where it is right for a clear majority.

In the following, we prove that deciding if a given quantum circuit yields a computational advantage over a classical heuristic is itself a problem with a quantum advantage, under the widely believed conjecture that there exist some quantum computations that are classically hard.

For simplicity, we focus on \textsc{LowWeightPauliProp} with $k = 1$. The statement holds for $k > 1$ using a similar proof.

\begin{theorem}[Quantum advantage in detecting quantum advantage]\label{prop:determineQAdv}
Consider the following computational problem for detecting quantum advantage in a given quantum circuit:
\begin{quote}
    \textsc{DetectingQuantumAdvantage}\\[0.3em]
    \textbf{Input:} A description of a quantum circuit $C$\\[0.1em]
    \textbf{Promise:} A quantum computer executing $C$ either exhibits a computational advantage over $\textsc{LowWeightPauliProp}$ or it does not (as defined above).\\[0.1em]
    \textbf{Output:} $\begin{cases}
        1, & \text{if $C$ exhibits a quantum advantage}\\
        0, & \text{if $C$ does not exhibit a quantum advantage}
    \end{cases}$
\end{quote}
\noindent This problem can be solved efficiently on a quantum computer (it is in $\mathsf{BQP}$) but is intractable for any classical algorithm (it is not in $\mathsf{BPP}$), assuming $\mathsf{BPP} \neq \mathsf{BQP}$.
\end{theorem}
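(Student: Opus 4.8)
The plan is to prove the two halves of the statement independently: membership in $\mathsf{BQP}$ by a direct sampling algorithm, and classical hardness by giving a polynomial-time reduction from an arbitrary $\mathsf{BQP}$ language to \textsc{DetectingQuantumAdvantage}, so that a hypothetical $\mathsf{BPP}$ solver for the problem would collapse $\mathsf{BPP}=\mathsf{BQP}$, contradicting the hypothesis.

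For $\mathsf{BQP}$ membership I would follow the recipe indicated in the main text. On input a circuit $C$, draw $m=\mathrm{poly}(n)$ independent uniform inputs $x\in\{0,1\}^n$; for each, estimate the true output probability $C(x)$ to additive precision $\eta\ll 1/3$ by executing $C$ on the quantum computer $O(\eta^{-2}\log m)$ times to obtain $\hat C(x)$, and compute $\mathcal{A}_C(x)=\textsc{LowWeightPauliProp}(C,x)$ classically in polynomial time; then report ``advantage'' iff the empirical fraction of sampled $x$ with $|\mathcal{A}_C(x)-\hat C(x)|\ge 1/3$ exceeds $1/2$. Under the promise this fraction is either $\ge 2/3$ or $\le 1/3$, so a Chernoff bound over the $m$ samples separates the cases with high probability; the only care needed is that the estimation error $\eta$ and the constant gap between $2/3$ and $1/3$ leave enough slack to absorb misclassification of the rare inputs whose true discrepancy lies near the threshold $1/3$.

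The heart of the argument is the reduction. Fix $L\in\mathsf{BQP}$ with an error-amplified verifier $Q$ that, on the fixed instance $y$, produces a decision qubit $b$ equal to $1$ with probability $\ge 1-2^{-n}$ if $y\in L$ and $\le 2^{-n}$ otherwise. From $y$ I would build in classical polynomial time a circuit $C_y$ on an input register $x$ (carried by data qubits $q_1,\dots,q_n$) plus ancillas, as follows: (i) run $Q$ coherently to generate $b$; (ii) \emph{controlled on} $b$, apply a Clifford gadget $H$ designed so that the Heisenberg evolution of the single measured-qubit observable $Z_a$ becomes a high-weight parity string, $Z_a\mapsto Z_{q_1}\cdots Z_{q_n}$, with no residual support on $a$; (iii) initialize the measured qubit $a$ in $\ket{+}$ and read out $Z_a$, so that $C_y(x)=\tfrac12\bigl(1+\langle Z_a\rangle\bigr)$. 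The design forces \textsc{LowWeightPauliProp} with $k=1$ to return $1/2$ \emph{regardless of $y$}: back-propagating $Z_a$ through the controlled-$H$ gate gives $\tfrac12 Z_a + \tfrac12 Z_a Z_b + \tfrac12 (Z_{q_1}\cdots Z_{q_n}) - \cdots$, weight-$1$ truncation keeps only $\tfrac12 Z_a$, and this evaluates to $0$ against $\ket{+}_a$ (and commutes unchanged through $Q$, whose workspace is disjoint from $a$). Meanwhile the true quantum value tracks the embedded decision: for $y\notin L$ the control is $b\approx 0$, the gadget is inert, $a$ stays in $\ket{+}$, and $C_y(x)\approx\tfrac12$ on every input, matching the heuristic, hence \emph{no advantage}; for $y\in L$ the control is $b\approx 1$, so $\langle Z_a\rangle=(-1)^{\mathrm{parity}(x)}$ and $C_y(x)\in\{0,1\}$ differs from the heuristic's $1/2$ by $1/2\ge 1/3$ on every input, a \emph{clean advantage}. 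Thus $\textsc{DetectingQuantumAdvantage}(C_y)=[\,y\in L\,]$, and composing any $\mathsf{BPP}$ algorithm for the problem with this reduction decides $L$ in randomized polynomial time; ranging over all $L\in\mathsf{BQP}$ yields $\mathsf{BPP}=\mathsf{BQP}$.

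I expect the main obstacle to be exactly the verification that the \emph{inactive} gadget does not spoil the no-advantage case: since \textsc{LowWeightPauliProp} cannot condition on the value of $b$, the which-branch information sits in the truncated weight-$2$ term $Z_aZ_b$, so a careless encoding would make the heuristic systematically misestimate $C_y$ even when $y\notin L$ and report ``advantage'' in both branches. The resolution — initializing the measured qubit in $\ket{+}$ so that the lone surviving weight-$1$ term vanishes and the heuristic's output coincides with the true reject-case value $1/2$ — must be checked carefully, including confirming that $H$ genuinely decouples the parity observable from $a$, that no low-weight Pauli mass is regenerated while propagating through $Q$, and that the $2^{-n}$ verifier error keeps $C_y$ a valid promise instance in both branches. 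Finally I would note that the construction is not special to $k=1$ or to parity: any observable whose support the gadget pushes above weight $k$ suffices, and because \textsc{LowWeightPauliProp} is merely one efficient classical heuristic, the same scheme shows that predicting advantage over \emph{any} fixed polynomial-time classical simulator is classically hard under $\mathsf{BPP}\neq\mathsf{BQP}$.
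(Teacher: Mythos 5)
Your proposal is correct, and the $\mathsf{BQP}$-membership half coincides with the paper's (random inputs, quantum estimation of $C(x)$, classical run of the heuristic, majority count against the promise gap). The hardness half, however, takes a genuinely different route. The paper appends to the amplified verifier a \emph{deep random} two-qubit-gate circuit $U=U_L\cdots U_1$ on the input register, preceded by a controlled-$U^\dagger$ conditioned on the majority-vote qubit: in the YES case the two cancel and the true value is $(-1)^{x_1}$, in the NO case the effective unitary is $U$ and the true value is $\approx 0$ because $U$ is an approximate $2$-design; the heuristic's output is then shown to be exponentially small in \emph{both} cases via a quantitative lemma ($\mathbb{E}_U\lVert O_{\mathrm{heur}}\rVert_F^2=(2/5)^L\,2^{n+n'}$, proved by a Weingarten computation), monotonicity of the Frobenius norm under the truncated evolution, and Markov's inequality over the choice of $U$. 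You instead use a single controlled Clifford gadget that lifts the measured observable $Z_a$ to the weight-$n$ parity $Z_{q_1}\cdots Z_{q_n}$, so the $k=1$ truncation discards the answer-carrying term in one step and the heuristic deterministically returns $1/2$, while the $\ket{+}$ initialization of the readout qubit pins the true NO-case value to $1/2$ as well. Your version is shorter and more transparent (no randomness over circuits, no $2$-design machinery, no concentration argument), and it isolates the mechanism of failure --- the truncation deletes precisely the Pauli component encoding the decision. The paper's version buys robustness: the decay lemma works for any truncation weight $k$ with a modified rate and does not hinge on one hand-crafted gate, and the NO-case smallness of the true value comes from generic scrambling rather than from the gadget being inert. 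The price of your economy is exactly the implementation care you flag yourself: the controlled Clifford must be compiled into layers whose Heisenberg conjugation of low-weight Paulis the heuristic can actually process, the gadget must fully decouple the parity from the readout qubit (e.g., a controlled-SWAP followed by a controlled CNOT ladder, rather than CNOTs targeting $a$, which would leave a residual $Z_a$ factor and zero out the YES-case signal against $\ket{+}$), and one must check that no $\Omega(1)$ weight-$1$ Pauli mass is regenerated during the layer-by-layer truncated propagation; a direct computation confirms that a natural compilation leaves only $\tfrac12 Z_a$ plus exponentially small weight-$1$ remnants, so the reduction goes through.
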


\begin{proof}
We divide the proof into two parts. One focused on showing that the problem is quantumly easy. And the second one focused on showing that the problem is classically hard.

\vspace{0.5em}\paragraph*{The problem is in $\mathsf{BQP}$ (Quantumly Easy):}
A quantum algorithm can solve \textsc{DetectingQuantumAdvantage} efficiently.
\begin{enumerate}
    \item Given the circuit description $C$, choose a set of $s = \mathcal{O}(1)$ random input strings $\{x_1, \dots, x_s\}$.
    \item For each $x_i$ for $i = 1, \ldots, s$:
        \begin{enumerate}
            \item Compute the heuristic's prediction $\mathcal{A}_C(x_i)$ using the classical \textsc{LowWeightPauliProp} algorithm as described in Appendix~\ref{app:lowweight}.
            \item Estimate the true probability $C(x_i)$ on a quantum computer. This is done by preparing the state $C(\ket{x_i}\ket{0^m})$ and measuring the first qubit. Repeating this a polynomial number of times yields an estimate of $C(x_i)$ to within a small error $\delta \ll 1/3$ with high probability.
        \end{enumerate}
    \item For each $x_i$ for $i = 1, \ldots, s$, check if $|\mathcal{A}_C(x_i) - C(x_i)| \ge 1/3$.
    \item If the fraction of inputs $x_i$ satisfying this condition is greater than $1/2$, output~$1$. Otherwise, output~$0$. By a standard Chernoff bound with an appropriate sample size $s$, this sampling procedure correctly distinguishes the case where the true fraction is $\ge 2/3$ from the case where it is $\le 1/3$ with high probability.
\end{enumerate}
The entire procedure involves a polynomial number of classical steps and a polynomial number of runs on a quantum computer, so the problem is in $\mathsf{BQP}$.

\vspace{0.5em}\paragraph*{The problem is not in $\mathsf{BPP}$ (Classically Hard):} 
We show this by contradiction. Suppose there exists a polynomial-time classical algorithm $\mathcal{D}$ that can solve \textsc{DetectingQuantumAdvantage}. We will show how to use $\mathcal{D}$ to solve any problem in $\mathsf{BQP}$ classically, which would imply $\mathsf{BPP} = \mathsf{BQP}$, contradicting our assumption. 

Before diving into technicalities, we briefly explain the main idea underlying our argument. Given any polynomial-size quantum circuit $C_?$ that solves a decision problem in $\mathsf{BQP}$, we will construct a new polynomial-size quantum circuit $C_\text{new}$ with input $x$ that has these properties: (1) If the answer to the decision problem is YES, then the output of the classical heuristic disagrees with the output of the quantum circuit $C_\text{new}$ for most values of the input $x$. (2) If the answer to the decision problem is NO, then the output of the classical heuristic agrees with the output of the quantum circuit $C_\text{new}$ for most values of the input $x$. Therefore, the polynomial-time classical algorithm $\mathcal{D}$, by determining whether the classical heuristic is successful or not, solves the decision problem. 

Now we proceed with the formal proof. Let $C_?$ be a polynomial-size quantum circuit that solves an instance of an arbitrary problem in $\mathsf{BQP}$. The circuit $C_?$ takes the all-zero state $\ket{0^m}$ as input, and when we measure the first qubit of $C_? \ket{0^m}$ in the $Z$ basis, we are promised that the probability of observing $1$ is either $\geq 2/3$ (YES instance) or $\leq 1/3$ (NO instance).
We now construct a new circuit such that applying $\mathcal{D}$ to the new circuit will enable one to accurately distinguish between the YES and NO instances.

\vspace{0.5em}
\textit{Step 1: Random Circuit Construction.}
Let $U = U_L \cdots U_1$ be an $n$-qubit random quantum circuit where each $U_j$ is a layer of random two-qubit gates, and the total depth $L$ can be chosen arbitrarily as long as $L$ is polynomial in $n$ and $m$. By choosing $L$ to be at least linear in $n$ with a sufficiently large constant, it is known that $U$ forms an $\epsilon$-approximate unitary $2$-design with $\epsilon = \mathcal{O}(1/2^n)$ \cite{harrow2009random, brandao2016local, schuster2024random}. This means the distribution over the random unitary $U$ nearly matches the Haar measure over all $n$-qubit unitaries $\mathrm{SU}(2^n)$ up to the second moment.

\vspace{0.5em}
\textit{Step 2: Amplified Circuit Construction.}
To amplify the success probability, we construct $C_{?, \text{ext}}$ that runs $\ell = \mathcal{O}(\log n)$ independent copies of $C_?$ and performs a coherent majority vote:
\begin{enumerate}
\item Run $\ell$ copies of $C_?$ on separate ancilla registers.
\item Apply a coherent majority vote circuit to the first qubits of each copy, storing the result in a designated ancilla qubit $q_{\text{maj}}$.
\end{enumerate}
By standard concentration inequalities, such as the Chernoff bound, this amplified circuit satisfies:
\begin{itemize}
\item If $C_?$ is a YES instance: $\Pr[\text{Measuring} \,\, q_{\text{maj}} \,\, \text{in $Z$ basis gives} \,\, 1] \geq 1 - 2^{-\Omega(\ell)} \geq 1 - 1/\mathrm{poly}(n)$.
\item If $C_?$ is a NO instance: $\Pr[\text{Measuring} \,\, q_{\text{maj}} \,\, \text{in $Z$ basis gives} \,\, 1] \leq 2^{-\Omega(\ell)} \leq 1/\mathrm{poly}(n)$.
\end{itemize}
Note that $C_{?, \text{ext}}$ acts on $m\ell + 1$ qubits.

\vspace{0.5em}
\textit{Step 3: New Circuit Construction.}
We construct the circuit $C_{\text{new}}$ that operates on the main $n$ qubits plus all ancilla qubits:
\begin{equation}
C_{\text{new}} = (U_L \cdots U_1 \otimes \Id) \cdot \text{Controlled-}\left(U_1^\dagger \cdots U_L^\dagger \right) \cdot \left(\Id \otimes C_{?, \text{ext}} \right),
\end{equation}
where $\text{Controlled-}(U_1^\dagger \cdots U_L^\dagger)$ applies $U^\dagger = U_1^\dagger \cdots U_L^\dagger$ to the main $n$ qubits conditioned on the majority vote qubit $q_{\text{maj}}$ being in the state $\ket{1}$.

\vspace{0.5em}
Let our observable be the Pauli operator $O=Z_1$ acting on the first qubit of the main system, and our initial state over the main and the ancilla systems be $\rho = \ketbra{x}{x} \otimes \ketbra{0^{m\ell +1}}{0^{m\ell +1}}$ for an input $n$-bit string $x \in \{0, 1\}^n$. Let us analyze the true expectation value:
\begin{equation}
    \langle Z_1 \rangle_{\text{true}, x} = \Tr[O \cdot C_{\text{new}} \cdot \rho \cdot C_{\text{new}}^\dagger].
\end{equation}
We separate the analysis into two cases.

\vspace{0.5em}
\noindent \textit{Case 1: $C_?$ is a YES instance}
With probability $\geq 1 - 1/\mathrm{poly}(n)$, the majority vote qubit $q_{\text{maj}} = 1$, so the effective unitary applied to the main qubits is:
\begin{equation}
U_{\text{eff}} = U_L \cdots U_1 \cdot (U_1^\dagger \cdots U_L^\dagger) = I.
\end{equation}
Therefore, for any input bitstring $x \in \{0, 1\}^n$, we have the following,
\begin{align}
\left| \langle Z_1 \rangle_{\text{true}, x} - (-1)^{x_1} \right| \leq 1/\mathrm{poly}(n),
\end{align}
where $x_1$ denotes the first bit of the input string $x$.

\vspace{0.5em}
\noindent \textit{Case 2: $C_?$ is a NO instance}
With probability $\geq 1 - 1/\mathrm{poly}(n)$, the majority vote qubit $q_{\text{maj}} = 0$, so the effective unitary applied to the main qubits is just $U$. Therefore,
\begin{align}
\left| \langle Z_1 \rangle_{\text{true}, x} - \bra{x} U^\dagger Z_1 U \ket{x} \right| \leq 1/\mathrm{poly}(n).
\end{align}
Because $U$ forms an $1/\poly(n)$-approximate unitary $2$-design, we have $|\bra{x} U^\dagger Z_1 U \ket{x}| \leq 1 / \poly(n)$ with high probability over the choice of $U$. Hence, for any input bitstring $x \in \{0, 1\}^n$, we have
\begin{align}
\left| \langle Z_1 \rangle_{\text{true}, x} - 0 \right| \leq 1/\mathrm{poly}(n)
\end{align}
with high probability.

\vspace{0.5em}
Now, we analyze how the \textsc{LowWeightPauliProp} heuristic $\mathcal{A}$ simulates the new circuit $C_{\mathrm{new}}$ layer-by-layer. The \textsc{LowWeightPauliProp} heuristic evolves $Z_1$ backwards through the entire circuit.
The following lemma focuses on propagation over the main $n$-qubit system and the first $L$ layers in $C_{\mathrm{new}}$.

\begin{lemma}[Decay of Frobenius Norm] \label{lem:decay-heuristic}
Consider a main system of $n$ qubits and an ancillary system of $n'$ qubits. Let $U = U_L \cdots U_1$ be a random quantum circuit on the main $n$-qubit system, where each layer $U_j$ consists of $n/2$ Haar-random two-qubit gates acting on disjoint pairs of qubits. Let $O_{\mathrm{heur}}$ be the operator computed by evolving $Z_1$ acting on $n+n'$ qubits backward through $U \otimes \Id$ using the classical \textsc{LowWeightPauliProp} heuristic. Then we have
\begin{equation}
\mathbb{E}_U\left[\|O_{\mathrm{heur}}\|_F^2\right] = \left(\frac{2}{5}\right)^L \cdot 2^{n+n'}.
\end{equation}
\end{lemma}
\begin{proof}
We prove this by analyzing the evolution layer by layer. We track how the squared Frobenius norm changes as we apply each layer of the heuristic evolution. Let $O^{(L)} = Z_1$. For $j = L$ down to $1$, each step of the heuristic evolution is given by:
\begin{equation}
O^{(j-1)} = \mathcal{P}_1(U_j^\dagger O^{(j)} U_j),
\end{equation}
where $\mathcal{P}_1$ projects onto Pauli operators of weight at most 1. Consider a single layer $U_j$ consisting of Haar-random two-qubit gates on disjoint pairs. Because the unitary $U_j$ only acts on the first $n$ qubits and all Pauli operators that act on two or more qubits are truncated and the evolution preserves the trace of zero, the $(n+n')$-qubit operator $O^{(j)}$ can be written as:
\begin{equation}
O^{(j)} = \sum_{i=1}^n \sum_{P \in \{X,Y,Z\}} \alpha_{P_i}^{(j)} P_i,
\end{equation}
where $P_i$ acts as Pauli operator $P$ on the $i$-th qubit in the main $n$-qubit systems and acts as identity $\Id$ on all other qubits. The squared Frobenius norm is:
\begin{equation}
\|O^{(j)}\|_F^2 = 2^{n+n'} \sum_{i=1}^n \sum_{P \in \{X,Y,Z\}} |\alpha_{P_i}^{(j)}|^2.
\end{equation}
For $Z_1$, we have $\alpha_{Z_1} = 1$ and all other coefficients are zero, so $\|Z_1\|_F^2 = 2^{n+n'}$.

Since each qubit participates in exactly one two-qubit gate per layer, and the gates are independent Haar-random unitaries, we can formally compute the expected squared Frobenius norm after one layer. After applying a two-qubit gate layer $U_j$ and projecting with $\mathcal{P}_1$, we get:
\begin{equation}
O^{(j-1)} = \mathcal{P}_1(U_j^\dagger O^{(j)} U_j) = \sum_{i'=1}^n \sum_{P' \in \{X,Y,Z\}} \alpha_{P'_{i'}}^{(j-1)} P'_{i'},
\end{equation}
where the new coefficients are:
\begin{equation}
\alpha_{P'_{i'}}^{(j-1)} = \frac{1}{2^{n+n'}} \text{Tr}\left( P'_{i'} \cdot U_j^\dagger O^{(j)} U_j \right) = \frac{1}{2^{n+n'}} \sum_{i=1}^n \sum_{P \in \{X,Y,Z\}} \alpha_{P_i}^{(j)} \text{Tr}\left( P'_{i'} U_j^\dagger P_i U_j \right).
\end{equation}
The squared Frobenius norm is:
\begin{align}
\|O^{(j-1)}\|_F^2 &= 2^{n+n'} \sum_{i'=1}^n \sum_{P' \in \{X,Y,Z\}} |\alpha_{P'_{i'}}^{(j-1)}|^2 \\
&= 2^{n+n'} \sum_{i'=1}^n \sum_{P' \in \{X,Y,Z\}} \left| \frac{1}{2^{n+n'}} \sum_{i=1}^n \sum_{P \in \{X,Y,Z\}} \alpha_{P_i}^{(j)} \text{Tr}\left( P'_{i'} U_j^\dagger P_i U_j \right) \right|^2 \\
&= \frac{1}{2^{n+n'}} \sum_{i'=1}^n \sum_{P' \in \{X,Y,Z\}} \left| \sum_{i=1}^n \sum_{P \in \{X,Y,Z\}} \alpha_{P_i}^{(j)} \text{Tr}\left( P'_{i'} U_j^\dagger P_i U_j \right) \right|^2
\end{align}
Taking expectation over the random two-qubit gate layer $U_j$:
\begin{align}
&\mathbb{E}_{U_j}[\|O^{(j-1)}\|_F^2] \\
&= \frac{1}{2^{n+n'}} \sum_{i'=1}^n \sum_{P' \in \{X,Y,Z\}} \mathbb{E}_{U_j} \left[ \left| \sum_{i=1}^n \sum_{P \in \{X,Y,Z\}} \alpha_{P_i}^{(j)} \text{Tr}\left( P'_{i'} U_j^\dagger P_i U_j \right) \right|^2 \right] \\
&= \frac{1}{2^{n+n'}} \sum_{i'=1}^n \sum_{i,i''=1}^n \sum_{\substack{P \in \{X,Y,Z\}\\P' \in \{X,Y,Z\}\\P'' \in \{X,Y,Z\}}} \alpha_{P_i}^{(j)} \overline{\alpha_{P''_{i''}}^{(j)}} \mathbb{E}_{U_j} \left[ \text{Tr}\left( P'_{i'} U_j^\dagger P_i U_j \right) \overline{\text{Tr}\left( P'_{i'} U_j^\dagger P''_{i''} U_j \right)} \right]\\
&= \frac{1}{2^{n+n'}} \sum_{i'=1}^n \sum_{i,i''=1}^n \sum_{\substack{P \in \{X,Y,Z\}\\P' \in \{X,Y,Z\}\\P'' \in \{X,Y,Z\}}} \alpha_{P_i}^{(j)} \overline{\alpha_{P''_{i''}}^{(j)}} \mathbb{E}_{U_j} \left[ \text{Tr}\left( P'_{i'} U_j^\dagger P_i U_j \right) \text{Tr}\left(  P'_{i'} U_j^\dagger P''_{i''} U_j \right) \right].
\end{align}
Now we need to compute the expectation $\mathbb{E}_{U_j} \left[ \text{Tr}\left( P'_{i'} U_j^\dagger P_i U_j \right) \text{Tr}\left(  P'_{i'} U_j^\dagger P''_{i''} U_j \right) \right]$. Since each qubit participates in exactly one two-qubit gate per circuit layer, we have two cases:
\begin{itemize}
    \item \textbf{Case 1:} Qubits $i$ and $i''$ are involved in different two-qubit gates. Let gate $V_1$ act on qubits $\{i, j\}$ and gate $V_2$ act on qubits $\{i'', j''\}$ where $\{i,j\} \cap \{i'',j''\} = \emptyset$. Since $V_1$ and $V_2$ are independent Haar-random gates, and $P_i$ only acts on the space of $V_1$ while $P''_{i''}$ only acts on the space of $V_2$,
    \begin{equation}
    \mathbb{E}_{V_1,V_2} \left[ \text{Tr}\left( P'_{i'} V_1^\dagger P_i V_1 \right) \text{Tr}\left( P'_{i'} V_2^\dagger P''_{i''} V_2 \right) \right] = 0.
    \end{equation}
    \item \textbf{Case 2:} Qubits $i$ and $i''$ are involved in the same two-qubit gate that act on the two qubits $\{i, j\}$. For a Haar-random two-qubit unitary gate $V$ and single-qubit Pauli operators $P_i$, $P''_{i''}$, $P'_{i'}$ acting within the same two-qubit space $\{i, j\}$, we have
    \begin{equation}
    \mathbb{E}_V\left[ \text{Tr}(P'_{i'} V^\dagger P_i V) \text{Tr}(P'_{i'} V^\dagger P''_{i''} V) \right] = \frac{4^{n+n'}}{15} \cdot \delta_{P_i, P''_{i''}},
    \end{equation}
    from the standard Weingarten calculation for Haar-random unitaries \cite{harrow2009random, brandao2016local}. Here, $\delta_{P_i, P''_{i''}}$ is $1$ if the two $n$-qubit Pauli operators $P_i, P''_{i''}$ are the same and $0$ otherwise.
    If $i'$ is not acted on by the two-qubit gate $V$, then because $P'_{i'}$ is traceless, we have
    \begin{equation}
        \mathbb{E}_V[\text{Tr}(P'_{i'} V^\dagger P_i V) \text{Tr}(P'_{i'} V^\dagger P''_{i''} V)] = 0.
    \end{equation}
\end{itemize}
Now, collecting all the surviving terms in the sum, we have
\begin{align}
\mathbb{E}_{U_j}[\|O^{(j-1)}\|_F^2] &= \frac{4^{n+n'}}{2^{n+n'}} \sum_{i=1}^n \sum_{P \in \{X, Y, Z\}} 6 \cdot \frac{1}{15} \cdot \left|\alpha_{P_i}^{(j)}\right|^2,
\end{align}
where $6$ comes from the possible choices of $P'_{i'}$ such that qubit ${i'}$ is involved in the unique two-qubit gate in $U_j$ that acts on qubit $i$. Using $\|O^{(j)}\|_F^2 = 2^{n+n'} \sum_{i=1}^n \sum_{P \in \{X,Y,Z\}} |\alpha_{P_{i}}^{(j)}|^2$, we have
\begin{equation}
\mathbb{E}_{U_j}[\|O^{(j-1)}\|_F^2] = \frac{2}{5} \|O^{(j)}\|_F^2.
\end{equation}
After $L$ layers, we have
\begin{equation}
\mathbb{E}_U\left[\|O_{\mathrm{heur}}\|_F^2\right] = \mathbb{E}_{U}[\|O^{(0)}\|_F^2] = \left(\frac{2}{5}\right)^L \mathbb{E}_{U}[\|O^{(L)}\|_F^2] = \left(\frac{2}{5}\right)^L \|Z_1\|_F^2 = \left(\frac{2}{5}\right)^L \cdot 2^{n+n'}.
\end{equation}
This concludes the proof.
\end{proof}

Let $\mathcal{P}$ be the projection onto the subspace of weight-$1$ Pauli operators. Let $T_V(O) = \mathcal{P}(V^\dagger O V)$ be the map for one step of the heuristic's backward evolution through a unitary $V$.

\begin{lemma}[Non-increasing Frobenius norm under the heuristic evolution]\label{lem:nonincreasing}
For any unitary $V$ acting on $q$ qubits and any operator $O$ on the same space, we have
\begin{equation}
\|T_V(O)\|_F \leq \|O\|_F.
\end{equation}
\end{lemma}
\begin{proof}
Since $V$ is unitary, conjugation by $V$ preserves the Frobenius norm: $\|V^\dagger O V\|_F = \|O\|_F$. The projection $\mathcal{P}$ onto a subspace can only decrease the Frobenius norm, so $\|\mathcal{P}(A)\|_F \leq \|A\|_F$ for any operator $A$. Therefore, we have the following,
\begin{equation}
\|T_V(O)\|_F = \|\mathcal{P}(V^\dagger O V)\|_F \leq \|V^\dagger O V\|_F = \|O\|_F.
\end{equation}
This completes the proof.
\end{proof}

Now we analyze the classical heuristic's simulation of circuit $C_{\text{new}}$. The circuit $C_{\text{new}}$ acts on the combined Hilbert space of $n$ main qubits and $m\ell + 1$ ancilla qubits, for a total of $n + m\ell + 1$ qubits. The observable of interest is $Z_1 \otimes \Id^{\otimes (m\ell + 1)}$, which we denote as $Z_1^{\text{ext}}$ for the extended system. The classical heuristic evolves $Z_1^{\text{ext}}$ backwards through the circuit $C_{\text{new}} = (U_L \otimes \Id) \cdots (U_1 \otimes \Id) \cdot \text{Controlled-}(U^\dagger) \cdot (\Id \otimes C_{?, \text{ext}})$. We analyze this evolution in reverse order:
\begin{itemize}
    \item \textit{Backwards evolution through the random circuit layers $(U_L \otimes \Id) \cdots (U_1 \otimes \Id)$:} We start by evolving $Z_1^{\text{ext}}$ backwards through the $L$ layers of random gates. Each layer applies $T_{U_j \otimes \Id}$ to the current operator. Let $O_L = Z_1^{\text{ext}} = Z_1 \otimes \Id^{\otimes (m\ell + 1)}$. After backwards evolution through $j = L$ down to $1$:
    \begin{equation}
    O_{j-1} = T_{U_j \otimes \Id}(O_{j}) = \mathcal{P}((U_j^\dagger \otimes \Id) O_{j} (U_j \otimes \Id)).
    \end{equation}
    Because the random circuit layers $U_j$ only act on the main $n$ qubits, we have
    \begin{equation}
        O_{\text{post-random}} := O_0 = O_{\mathrm{heur}}
    \end{equation}
    for the $(n+n')$-qubit observable $O_{\mathrm{heur}}$ defined in Lemma~\ref{lem:decay-heuristic} with $n' = m\ell+1$. We have:
    \begin{equation}
    \mathbb{E}_U[\|O_{\text{post-random}}\|_F^2] = \mathbb{E}_U[\|O_{\mathrm{heur}}\|_F^2] = \left(\frac{2}{5}\right)^L \cdot 2^{n + m\ell + 1}.
    \end{equation}
    This implies that over the choices of the $L$-layer random quantum circuit $U$, the average Frobenius norm decays exponentially in $L$.

    \item \textit{Backwards evolution through the controlled operation:} The controlled operation is:
    \begin{equation}
    \text{Controlled-}(U^\dagger) = |0\rangle\langle 0|_{q_{\text{maj}}} \otimes \Id_{\text{main}} \otimes \Id_{\text{rest}} + |1\rangle\langle 1|_{q_{\text{maj}}} \otimes U_{\text{main}}^\dagger \otimes \Id_{\text{rest}}.
    \end{equation}
    After backwards evolution, we have the observable,
    \begin{equation}
    O_{\text{post-controlled}} = T_{\text{Controlled-}(U^\dagger)}(O_{\text{post-random}}).
    \end{equation}
    By Lemma~\ref{lem:nonincreasing}, we have $\|O_{\text{post-controlled}}\|_F \leq \|O_{\text{post-random}}\|_F$.

    \item \textit{Backwards evolution through $(\Id \otimes C_{?, \text{ext}})$:} Finally, we evolve backwards through $(\Id \otimes C_{?, \text{ext}})$:
    \begin{equation}
    O_{\text{final}} = T_{\Id \otimes C_{?, \text{ext}}}(O_{\text{post-controlled}}).
    \end{equation}
    By Lemma~\ref{lem:nonincreasing}, we have $\|O_{\text{final}}\|_F \leq \|O_{\text{post-controlled}}\|_F$.
\end{itemize}

Combining these inequalities, we obtain the following:
\begin{equation}
\mathbb{E}_U[\|O_{\text{final}}\|_F^2] \leq \mathbb{E}_U[\|O_{\text{post-controlled}}\|_F^2] \leq \mathbb{E}_U[\|O_{\text{post-random}}\|_F^2] \leq \left(\frac{2}{5}\right)^L \cdot 2^{n + m\ell + 1}.
\end{equation}
By Markov's inequality, with probability at least $1 - 2^{-n}$ over the choice of $U$, we have
\begin{equation}
    \|O_{\text{final}}\|_F^2 \leq \left(\frac{2}{5}\right)^L \cdot 2^{2n + m\ell + 1}.
\end{equation}
Therefore, for any input bitstring $x \in \{0, 1\}^n$, we have
\begin{equation}
|\langle Z_1 \rangle_{\text{heur}, x}| = |\text{Tr}[O_{\text{final}} \cdot (|x\rangle\langle x| \otimes |0^{m\ell+1}\rangle\langle 0^{m\ell+1}|)]| \leq \|O_{\text{final}}\|_F \leq \sqrt{\left(\frac{2}{5}\right)^L \cdot 2^{2n + m\ell + 1}}.
\end{equation}
We can choose $L$ arbitrarily as long as $L$ is polynomial in $n + m \ell + 1$. By choosing $L$ to be greater than $6(n + m \ell + 1)$, we have $\left(\frac{2}{5}\right)^L \cdot 2^{2 n + m\ell + 1} \leq (1/4)^{n + m \ell + 1}$. Hence, irrespective of the YES and NO instances, the output of the classical heuristic for any input bitstring $x \in \{0, 1\}^n$ satisfies
\begin{equation}
    |\langle Z_1 \rangle_{\text{heur}, x}| \leq \left(\frac{1}{2}\right)^{n + m \ell + 1}
\end{equation}
for all except an exponentially small fraction of random quantum circuits $U$.

Now we establish the separation between YES and NO instances. For YES instances, for any input bitstring $x \in \{0, 1\}^n$, we have:
\begin{equation}
|\langle Z_1 \rangle_{\text{true}, x} - \langle Z_1 \rangle_{\text{heur}, x}| \geq |(-1)^{x_1}| - |\langle Z_1 \rangle_{\text{heur}, x}| - \frac{1}{\text{poly}(n)} \geq 1 - \frac{1}{\text{poly}(n)} > \frac{2}{3}
\end{equation}
for any sufficiently large $n$ and all except exponentially small fraction of random quantum circuit $U$. For NO instances, for any input bitstring $x \in \{0, 1\}^n$, we have:
\begin{equation}
|\langle Z_1 \rangle_{\text{true}, x} - \langle Z_1 \rangle_{\text{heur}, x}| \leq |\langle Z_1 \rangle_{\text{true}, x}| + |\langle Z_1 \rangle_{\text{heur}, x}| \leq \frac{1}{\text{poly}(n)} < \frac{1}{3}
\end{equation}
for any sufficiently large $n$ and all except exponentially small fraction of random quantum circuit $U$. This establishes the following:
\begin{itemize}
\item If $C_?$ is a YES instance, then a quantum computer executing quantum circuit $C_{\text{new}}$ exhibits a computational advantage over \textsc{LowWeightPauliProp}, so $\mathcal{D}(C_{\text{new}}) = 1$.
\item If $C_?$ is a NO instance, then a quantum computer executing quantum circuit $C_{\text{new}}$ does not exhibit a computational advantage over \textsc{LowWeightPauliProp}, so $\mathcal{D}(C_{\text{new}}) = 0$.
\end{itemize}
Therefore, we can solve the original $\mathsf{BQP}$ problem for $C_?$ by the following classical algorithm:
\begin{enumerate}
\item Construct the circuit $C_{\text{new}}$ from $C_?$ as described above.
\item Apply the assumed classical algorithm $\mathcal{D}$ to determine if a quantum computer executing quantum circuit $C_{\text{new}}$ does not exhibit a computational advantage over \textsc{LowWeightPauliProp}.
\item Output the result of $\mathcal{D}$.
\end{enumerate}
Since this procedure works for any $\mathsf{BQP}$ problem and runs in polynomial time, we would have $\mathsf{BPP} = \mathsf{BQP}$, contradicting our assumption that $\mathsf{BPP} \neq \mathsf{BQP}$. Therefore, no polynomial-time classical algorithm can solve \textsc{DetectingQuantumAdvantage}, completing the proof.
\end{proof}

\clearpage
\bibliography{refs}

\end{document}